\documentclass[11pt]{article}
\hoffset=-1in \textwidth=500pt

\usepackage{amsmath,amssymb,subfigure,epsfig,bbm,stmaryrd,MnSymbol,mathrsfs,url}
\usepackage{makeidx,overpic}
\makeindex
\newcommand{\balpha}{\boldsymbol{\alpha}}
\newcommand{\bbeta}{\boldsymbol{\beta}}
\newcommand{\supp}{\texttt{supp}}
\newcommand{\bPi}{\boldsymbol{\Pi}}

\usepackage{amsthm}
\newtheorem{theorem}{Theorem}

\usepackage{xcolor}
\usepackage{overpic}
\usepackage{xr}

\newcommand{\R}{\mathbb{R}}






\newcommand{\vct}[1]{\boldsymbol{#1}}













\DeclareMathOperator*{\minimize}{\text{minimize}}


\newcommand{\va}{\vct{a}}
\newcommand{\vb}{\vct{b}}

\newcommand{\ve}{\vct{e}}

\newcommand{\vu}{\vct{u}}

\newcommand{\vx}{\vct{x}}

%

%


%

%





\begin{document}

\title{Optimal Allocation of Quantized Human Eye Depth Perception for Light Field Display Design}
\author{Alireza Aghasi\footnote{Department of Analytics, Robinson College of Business, Georgia State University, Atlanta, GA 30303}, Barmak Heshmat\footnote{BRELYON, Dept of Optics, 555 Price Avenue, Redwood City , CA, USA 94063}, Leihao Wei\footnote{UCLA, Dept. of Electrical and Computer Eng., 420 Westwood Plaza, Los Angeles, CA, USA 90095},
Moqian Tian\footnote{Meta Co., Dept of Neuroscience, 2855 Campus Dr., San Mateo, CA, USA 94403}, Steven  A. Cholewiak\footnote{Optometry and Vision Science, University of California, Berkeley, CA, USA 94720}}

	\maketitle
	
	\begin{abstract}
		Creating immersive 3D stereoscopic, autostereoscopic, and lightfield experiences are becoming the center point of optical design of future head mounted displays and lightfield displays. However, despite the advancement in 3D and light field displays; there is no consensus on what are the necessary quantized depth levels for such emerging displays at stereoscopic or monocular modalities. Here we start from psychophysical theories and work toward defining and prioritizing quantized levels of depth that would saturate the human depth perception. We propose a general optimization framework, which locates the depth levels in a \emph{globally optimal} way for band limited displays. While the original problem is computationally intractable, we manage to find a tractable reformulation as maximally covering a region of interest with a selection
		of hypographs corresponding to the monocular depth of field profiles. The results show that on average 1731 stereoscopic and 8 monocular depth levels (distributed from 25 cm to infinity) would saturate the visual depth perception. Also the first 3 depth levels should be allocated at (148), then (83, 170), then (53, 90, 170) distances respectively from the face plane to minimize the monocular error in the entire population. The study further discusses the 3D spatial profile of the quantized stereoscopic and monocular depth levels. The study provides fundamental guidelines for designing optimal near eye displays, light-field monitors, and 3D screens. 
	\end{abstract}
	
\section{Introduction}
There has been a lot of traction toward more immersive lightfield and autostereoscopic experiences due to advancement in electronics and micro fabrications. Unlike stereoscopic 3D where the 3D perception is created based only on the left-eye, right-eye disparity with fixed accommodation for eye lens, a richer lightfield experience manipulates the wavefront per eye to create depth perception via accommodation monocularly. This reduces or eliminates the well known accommodation-vergence mismatch \cite{patterson2007human,hoffman2008vergence} and therefore the eye stress \cite{ukai2008visual,vienne2014effect}. There has been a push and many breakthroughs for realizing ever more realistic lightfield experiences in optics, graphics, and display community \cite{huang2017systematic,fattal2013multi,wetzstein2011layered,lanman2011polarization, jones2007rendering, yoo2020optimizing, jo2019tomographic, rolland2000multifocal, matsuda2017focal, tay2008updatable, yaracs2010state, takaki2010multi, teng2015improved}. The major categories of methods for creating such experiences are geometrical \cite{huang2017systematic,fattal2013multi}, computational \cite{wetzstein2011layered,lanman2011polarization, jones2007rendering}, multi-focal \cite{yoo2020optimizing, jo2019tomographic, rolland2000multifocal, matsuda2017focal}, phase based holographic \cite{tay2008updatable, yaracs2010state}, and multi-angular \cite{takaki2010multi, teng2015improved}. Each method has its own weaknesses and advantages. For example, super multi-view (SMV) provides lightfield at very compact form factor but is limited to very small viewing zone (also known as eyebox for head mounted displays) and low resolution \cite{takaki2010multi, teng2015improved}, computational methods increase the resolution but come with haze and temporal flickering artifacts \cite{wetzstein2011layered,lanman2011polarization, jones2007rendering}, and holographic methods mostly struggle with color nonuniformity and fringing or speckle artifacts \cite{tay2008updatable, yaracs2010state}. The multi-focal method provides a clean image but requires large volume and gives a sense of depth discontinuity if its not combined with parallax \cite{yoo2020optimizing, jo2019tomographic, rolland2000multifocal, matsuda2017focal}.

Despite all the research that has been going toward realizing richer lightfield experiences both in industry and academia \cite{patterson2007human,hoffman2008vergence, ukai2008visual,vienne2014effect,huang2017systematic,fattal2013multi,wetzstein2011layered,lanman2011polarization, jones2007rendering, yoo2020optimizing, jo2019tomographic, rolland2000multifocal, matsuda2017focal, tay2008updatable, yaracs2010state, takaki2010multi, teng2015improved}, since the majority of lightfield modalities have been rather recent, there is no clear notion of what is the quantized limit of human eye saturation in depth, and what are even elementary lightfield depth levels that should be created in a discrete quantized manner to provide the richest experience with current band limited solutions \cite{heshmat2018fundamental, yoo2020optimizing}. What is the equivalent digital resolution of the human eye in depth both with binocular and monocular view? Where are these depth levels located in 3D space? Which distances are most important if the system is band limited and cannot render all the depths? These are fundamental questions with major impact on emerging virtual reality (VR) and augmented reality (AR) industry as well as upcoming lightfield displays.

Some of these questions are pondered over for decades in psychophysics of human perception and neuroscience literature with completely different approaches \cite{ogle1932analytical, ogle1956stereoscopic, campbell1957depth, green1965effect, legge1987tolerance, marcos1999depth, granrud1984comparison, anderson2008minus, bulthoff1998top, parker2007binocular, tsushima2014higher, bernal2014depth, ginis2012wide, howard2002seeing,fulvio2017use}. For example, there has been numerous studies on depth perception in specific age groups \cite{granrud1984comparison, anderson2008minus}, stereoscopic depth acuity \cite{ogle1932analytical, ogle1956stereoscopic}, monocular depth of field variations \cite{marcos1999depth, bernal2014depth}, contrast perception \cite{green1965effect}, spatial acuity \cite{campbell1957depth}, accommodation optical characterization \cite{legge1987tolerance, bulthoff1998top, parker2007binocular, tsushima2014higher, bernal2014depth, ginis2012wide, howard2002seeing,fulvio2017use}, and color perception of the eye \cite{hofmann2015advances} and the relations between all these different parameters.

While these studies are very informative, psychophysical studies are human centric with no specific interest in digital augmentation or quantization of human perception with emerging display technologies \cite{fulvio2017use}. For example while there are scattered studies on human eye depth of field for different age groups and pupil conditions \cite{marcos1999depth, granrud1984comparison, anderson2008minus}, there is no accurate or scientific guidelines or technical tool to design a VR or AR display that would satisfy such perception acuity \cite{heshmat2018fundamental, yoo2020optimizing}. A specific example of this gap between these two research communities is the case for eye temporal response, which for a long time \cite{kelly1974spatio} was considered to be around 60 Hz by psychophysics literature, but was rather contradicted by display manufacturers as they realized that the customers, specifically those at the gaming and computer graphics sector, were demanding 120-240Hz frame rates. This contradiction was resolved when new studies found that the eye temporal response is very much color and contrast dependent \cite{davis2015humans}.

In this study we find such design guidelines for depth perception by developing a mathematically efficient optimization framework which uses psychophysical theories. The optimization framework that we will put forth is general and can operate with any set of potential depth of field (DoF) profiles at an arbitrary spatial frequency. Given a dense set of potential DoF profiles, our program allows maximally covering the scope of vision with any $T$ number of profiles, where $T$ is a positive integer. The covering is performed in a way that on average the highest visual quality is achieved. In its conventional form the underlying optimization is intractable, however using ideas from hypographs and shape composition, we manage to present an equivalent formulation with an efficient convex relaxation.

We explore the depth perception quantization to understand what can be the equivalent depth resolution of the eye from display design perspective. We then explore and discuss the priority of these depth levels for a band limited system impacted by user statistics and display parameters. Finally, we briefly discuss the experimental method that may be applied to validate these guidelines. These guidelines can be implemented and mapped into different parameters in a broad range of lightfield displays such as super multiview \cite{huang2017systematic,matsuda2017focal}, high density directional \cite{takaki2006high}, and depth filtering multi-focal \cite{hoffman2008vergence, ukai2008visual, mackenzie2010accommodation, liu2010systematic} lightfield displays both in head mounted or far standing, augmented or virtual reality modalities. They may also be adapted for design of lightfield cameras and eliminating redundant acquisition of depth information \cite{bishop2011light, takahashi2018focal}.

\textbf{Mathematical Notations}: For our technical discussions, we use lowercase and uppercase boldface for vectors and matrices, respectively. Given two vectors $\va, \vb\in\mathbb{R}^n$ with elements $a_i$ and $b_i$, the notation $\va\leq\vb$ indicates that $a_i\leq b_i$, for $i=1,2,\ldots,n$. For a more concise representation, given an integer $n$, we use $[n]$ to denote the set $\{1,2,\ldots,n\}$. Finally, given a function $f:\mathbb{R}^n\to \mathbb{R}$, $\supp~\! f$ denotes the support of the function, that is, $\supp~\! f = \{\vx\in\mathbb{R}^n: f(\vx)\neq 0\}$.

\section{Problem Statement: Monocular Depth Resolution}
There are dozens of stereoscopic and monocular depth cues that contribute to human depth perception \cite{howard2002seeing}, perspective, depth from motion, occlusion, and shadow stereopsis, etc. Here we only focus on optics of two major depth cues that is crystalline lens accommodation and vergence of two eyes and we assume that the rest are perception based and solely neurological. For simplicity and practicality, we only consider these two cues for depth perception as the optical architecture of the display is essential in its fabrication and the content contrast can be varied arbitrarily. For simplicity we frame our mathematical model on monocular depth levels and later on slightly cover stereoscopic depth in the discussion section. The main idea is to quantize and prioritize a discrete level of depths that the display can provide in such a way to minimize the depth perception error for an arbitrary image. For this purpose we first have to have a rough estimate or idea about the maximum or total monocular resolution measured in the lab.

The accommodation capability of individual eyes provides the monocular depth perception. If the largest diopter range reported in the literature for very young eye (15 diopter \cite{granrud1984comparison,anderson2008minus}) is divided by the shallowest depth of field reported in the literature (0.15 D FWHM \cite{marcos1999depth}) then the 100 depth levels are the absolute maximum number that human eye at age of 10 can distinguish with 6 mm pupil size in the dark environment. However, if one assumes an average of 6 D range for adults with 0.15 D depth of field, this maximum number of depth levels reduces rapidly to 40 levels. It is noteworthy to mention that depth of field is a diopter range that depends on many other parameters, image intensity and defocus, wavelength, eye point spread function, and/or MTF at different spatial frequencies which all also varies with eye pupil diameter and age. A common way to measure the DoF at a given eye pupil diameter is to record the image intensity of a delta function and look at the maximum intensities and threshold at a desired number of the normalized profile based on an agreement. The maximum intensities tend to have a gaussian-like profile based on \cite{marcos1999depth}. This type of profile is also noticed for MTF at a given cycles-per-degree. Accommodation range significantly varies with age \cite{anderson2008minus}; therefore, it is essential to consider these two parameters in laying out the physical localizations of these depth levels. Here we use an iterative method to localize the depth levels up to 10 meters. 

The DoF for 2 mm, 4 mm and 6 mm pupil diameter has been previously studied in \cite{marcos1999depth}. Anderson et al. \cite{anderson2008minus} has used the objective method to measure the accommodative amplitude in a wide age range of individuals, and has given a sigmoidal function fit to the measured data. The functions were used to find the max accommodative amplitude. For quantization of monocular depth levels we started at the nearest focal plane given by this max accommodation and iteratively found the next focal plane at a step of DoF from\cite{marcos1999depth}. The iteration stops when the focal plane distance is larger than 10 m (Figure \ref{fig1}(a) and (b)). This figure shows that, as one grows older the distance to the nearest focal plane becomes larger and the number of total focal planes decreases; for example, one is able to distinguish 13 focal planes at age of 10 but can distinguish only 2 quantized focal planes at age of 50.  For depth larger than 10 m the depth levels become exponentially sparse based on this iterative quantization. Additionally, experimental studies confirm that after only 8 m the eye reaches its monocular infinity in all pupil sizes and starts to show accumulative negative error even more severely than stereoscopic depth perception, which will be further discussed in the discussion section \cite{palmisano2010stereoscopic}. If one uses the depth of field profile measurement in  \cite{marcos1999depth} as a convolving profile at a given depth in the iterative method, then a continuum of DoF profiles with distance can be calculated as in Figure \ref{fig1}(c). Here the vertical axis is the normalized maximum intensity profile of a delta function seen at each distance enforced by the iteration steps.  

\begin{figure}[t]
	\centering \begin{overpic}[trim={.5cm 0cm  .5cm .5cm},clip,width=5in]{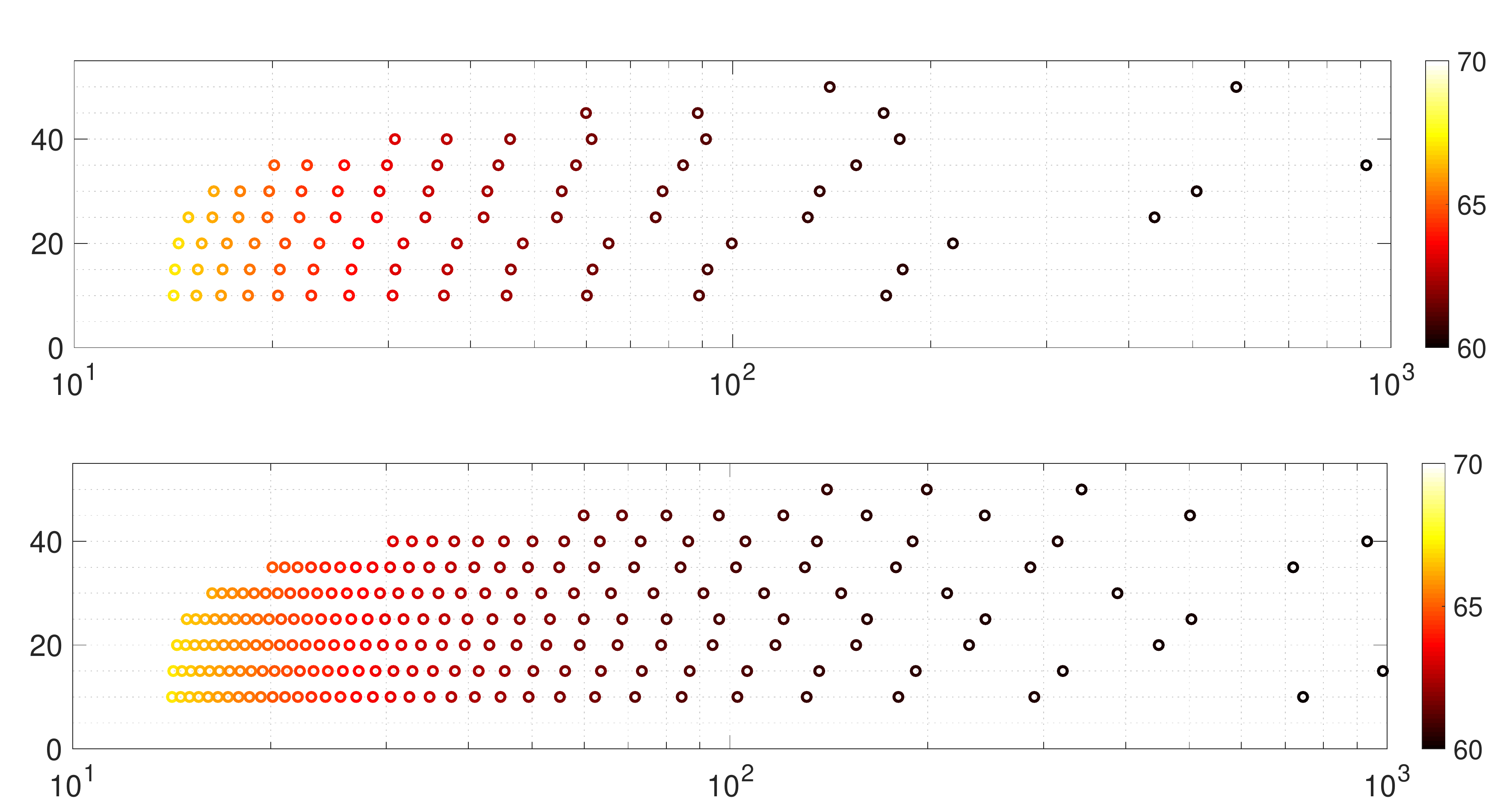}
		\put (43,25) {\scalebox{.8}{(a)}} 
		\put (43,-2.5) {\scalebox{.8}{(b)}} 
		\put (-3,10) {\rotatebox{90}{\scalebox{.7}{Age}}} 
		\put (-3,36) {\rotatebox{90}{\scalebox{.7}{Age}}} 
		\put (52,0) {\scalebox{.7}{Focal plane distance (cm)}} 
		\put (52,27.5) {\scalebox{.7}{Focal plane distance (cm)}} 
		\put (5,48) {\scalebox{.7}{Pupil diameter = 2mm}} 
		\put (5,20.6) {\scalebox{.7}{Pupil diameter = 6mm}} 
	\end{overpic}\\[.25cm]
	\hspace{-.25cm} \begin{overpic}[trim={.5cm 0cm  .3cm .5cm},clip,width=5in]{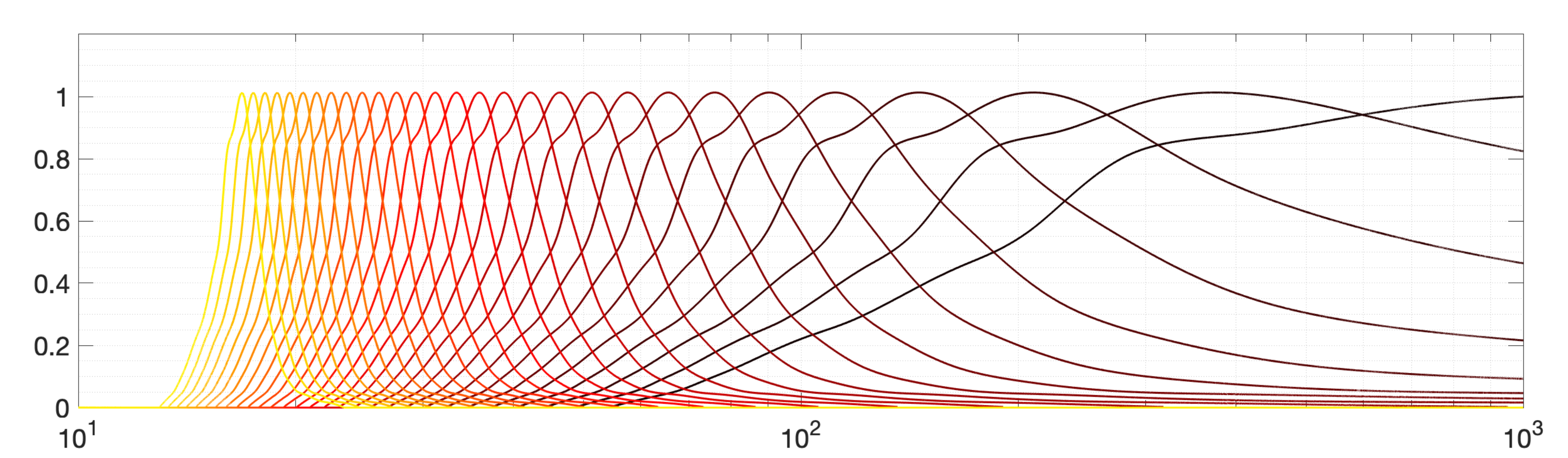}
		\put (43,-3) {\scalebox{.8}{(c)}} 
		\put (52,-1) {\scalebox{.7}{Focal plane distance (cm)}} 
		\put (-3,3) {\rotatebox{90}{\scalebox{.7}{Normalized intensity}}} 
	\end{overpic}
	\vspace{.1cm}
	\caption{ Quantized monocular distinguishable focal planes with age and pupil diameter variations. Focal plane distances are within 10 meters. Colorbar shows the corresponding total eye diopter (60 D from a relaxed eye, plus accommodation power).  (a): Pupil size equals 2 mm. (b): Pupil size is 6 mm. (c): The depth of field profiles for a pupil diameter 6 mm and age 30. }\label{fig1}
\end{figure}

Now that we have found a rough estimate of these monocular depth levels using such iterative approach, given the statistics of eye diopters across different ages \cite{anderson2008minus} and based on daily task operations \cite{Richter2019Always}, we would want to find the priority of allocating depth levels in an optimal fashion. This is significant in the design of 3D displays as the bandwidth is limited and only a limited number of monocular depth levels can be rendered.  The model has to be universally applicable to any type of displays that provides monocular depth; therefore, the methodology that we put forth is general. 

To explain the problem mathematically, in Figure \ref{fig2}(a) we have shown a train of 30 functions (symbolically representing a continuum of DoF profiles), where the goal is selecting $T=4$ functions such that the union of areas under them maximally covers the space (representing the design of a display which restricts the number of depth levels to $T=4$). Figure \ref{fig2}(b) shows the optimal selection of the functions, whereas Figure\ref{fig2}(c) shows an alternative selection which covers a significantly smaller area. As will be discussed in  the next section, this is a challenging combinatorial problem, solving of which would be the focus of next section. 
\begin{figure}[htb!]
	\centering \begin{overpic}[trim={1.8cm 1.2cm  1.5cm 1.2cm},clip,width=5in]{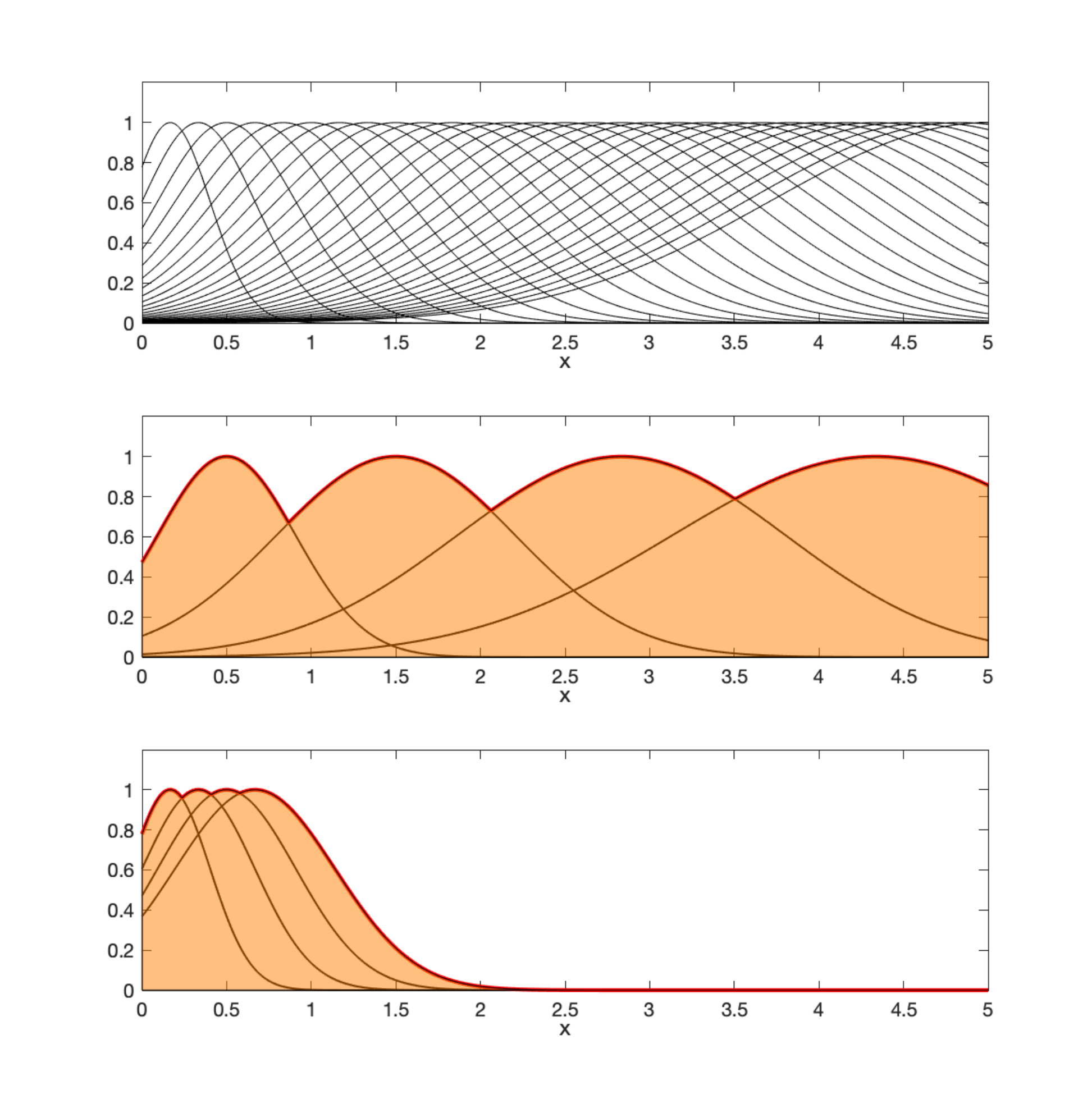}
		\put (48,66.5) {\scalebox{.8}{(a)}} 
		\put (48,32) {\scalebox{.8}{(b)}} 
		\put (48,-2) {\scalebox{.8}{(c)}} 
	\end{overpic}
	\vspace{.1cm}
	\caption{Problem statement.	(a): A set of $n=30$ positive and bounded Gaussian knolls defined in $[0,5]$. (b): A selection of $T=4$ knolls from the functions in (a), which together maximally cover the box defined by $[0,5] \times [0,1]$. (c): For any alternative selection of four knolls other than that shown in panel (b), the union of the areas under the knolls covers a smaller area.}\label{fig2}
\end{figure}

\section{Mathematical Modeling}\label{section:math}
Consider a closed domain $D\in\R^s$ and a set of $n$ positive and bounded functions $f_1,f_2,\ldots,f_n:D\to [0,1]$. For convenience, we adopt the naming convention in \cite{aghasi2013sparse}, and refer to each $f_i$ as a \emph{knoll}. To maintain a general formulation, the only assumptions that we make about a knoll is being bounded (i.e., $0\leq f_i(\vx)\leq 1$) and defined for every point in $D$. Now, consider $T\in\mathbb{N}$, which is a number less than $n$. Restricting the selection to $T$ knolls, the goal is picking a subset of $f_1,f_2,\ldots,f_n$ such that the union of the volumes under these selected knolls maximally covers the box $D\times [0,1]$. To relate the notation, Figure \ref{fig2}(a) shows $n=30$ Gaussian-shaped knolls with varying means and variances defined in $D= [0,5]$, and Figure \ref{fig2}(b) shows the optimal selection of $T=4$ functions that maximally cover the box $[0,5]\times [0,1]$.

The proposed problem can be cast as an optimization program with binary decision variables. More specifically, to find the $T$ optimal covering knolls, one may address the following combinatorial program
\begin{align}\notag
	\minimize_{\alpha_1,\ldots,\alpha_n}&~~\int_D \mbox{d}\vx - \int_D ~\max\{\alpha_1f_1(\vx),\alpha_2 f_2(\vx),\ldots\alpha_n f_n(\vx)\}~\mbox{d}\vx\\ \notag \mbox{subject to:}& ~~\sum_{i=1}^n\alpha_i \leq T\\ 
	& ~~\alpha_i\in\{0,1\}, ~~~i\in[n].\label{eqT1}
\end{align}
Notice that the objective of \eqref{eqT1} is the difference between the area (volume) of the box $D\times[0,1]$ and the selected knolls. Hence minimizing this objective  certifies a maximum covering of the box with the $T$ selected knolls.

Mathematically, the objective in \eqref{eqT1} is concave in $\alpha_i$ (because $\max\{\alpha_1f_1(\vx),\alpha_2 f_2(\vx),\ldots\alpha_n f_n(\vx)\}$ is convex),  which poses an additional challenge on top of having binary decision variables. Usually, when the objective function and  the constraints are convex (disregarding the binary constraints), a relaxation of the binary decision variables yields a convex relaxation to the original combinatorial problem. Such convex relaxation is useful as a computational approximation of the original problem and in many cases can even accurately find the combinatorial solution. Unfortunately, in its original form, the combinatorial problem (\ref{eqT1}) does not offer a straightforward convex relaxation. 

In the sequel we propose a reformulation of \eqref{eqT1}, which remedies the computational issues stated above, and allows us to solve it for the global minimizer.

\subsection{An Alternative Binary Program with Convex Relaxation}
To present a computationally tractable reformulation of \eqref{eqT1}, we proceed by introducing some technical notions. 

Given a function $f(\vx):\R^s\to\R$, the \emph{hypograph} of $f$ is the set of points lying on or below its graph, i.e., 
\[\texttt{hyp}(f) = \left\{ (\vx,y)\in\R^s\times\R: y\leq f(\vx)\right\}\subseteq \R^{s+1}.
\]
The following theorem, relates the point-wise maximum of a set of functions to the union of their corresponding hypographs. 
\begin{theorem}\label{th1}
	Given functions $f_1,f_2,\ldots,f_n:\R^s\to\R$, consider the pointwise maximum-function defined as
	\[f_\lor (\vx) = \max \{f_1(\vx), f_2(\vx),\ldots,f_n(\vx)\}.\]
	Then
	$\texttt{hyp}(f_\lor) = \bigcup_{i=1}^n \texttt{hyp}(f_i).
	$
\end{theorem}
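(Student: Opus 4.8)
The plan is to establish the set identity $\texttt{hyp}(f_\lor) = \bigcup_{i=1}^n \texttt{hyp}(f_i)$ by proving the two inclusions separately, working directly from the definition of the hypograph and of the pointwise maximum. No machinery beyond elementary set-theoretic reasoning and the fact that $\R$ is totally ordered is needed.

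First I would prove $\bigcup_{i=1}^n \texttt{hyp}(f_i) \subseteq \texttt{hyp}(f_\lor)$. Take an arbitrary point $(\vx,y)$ in the left-hand set, so that $(\vx,y)\in\texttt{hyp}(f_k)$ for some $k\in[n]$, i.e. $y\leq f_k(\vx)$. Since $f_k(\vx)\leq \max\{f_1(\vx),\ldots,f_n(\vx)\} = f_\lor(\vx)$, chaining the two inequalities gives $y\leq f_\lor(\vx)$, hence $(\vx,y)\in\texttt{hyp}(f_\lor)$. This direction holds for an arbitrary (even infinite) index family.

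For the reverse inclusion $\texttt{hyp}(f_\lor)\subseteq\bigcup_{i=1}^n \texttt{hyp}(f_i)$, take $(\vx,y)\in\texttt{hyp}(f_\lor)$, so $y\leq f_\lor(\vx)=\max\{f_1(\vx),\ldots,f_n(\vx)\}$. The one step that deserves care is the appeal to \emph{attainment} of the maximum: because the collection $\{f_1,\ldots,f_n\}$ is finite, there exists an index $k\in[n]$ (depending on $\vx$) with $f_\lor(\vx)=f_k(\vx)$. Then $y\leq f_k(\vx)$, so $(\vx,y)\in\texttt{hyp}(f_k)\subseteq\bigcup_{i=1}^n\texttt{hyp}(f_i)$. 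Combining the two inclusions yields the stated equality.

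I do not anticipate a genuine obstacle; the only subtlety is precisely the use of finiteness of $n$ in the second inclusion, since for an infinite family one would in general only have the pointwise supremum in place of a maximum, and the identity could degrade to a strict containment (boundary points of the hypograph of the sup need not lie in any individual hypograph). As the theorem is stated for finitely many functions, this issue does not arise.
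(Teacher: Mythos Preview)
Your proof is correct and takes essentially the same approach as the paper: both arguments rest on the elementary equivalence that $y\leq\max\{f_1(\vx),\ldots,f_n(\vx)\}$ if and only if $y\leq f_i(\vx)$ for some $i$, with the paper presenting this as a single chain of set equalities via the logical OR, while you split it into two inclusions. Your explicit remark that finiteness of $n$ is what guarantees attainment of the maximum (and hence the reverse inclusion) is a nice addition that the paper leaves implicit.
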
		
\begin{proof}
	The proof follows from the basic properties of the pointwise-maximum operation:
	\begin{align*}
		\texttt{hyp}(f_\lor) &= \left\{ (\vx,y)\in\R^s\times\R: y\leq \max \{f_1(\vx), f_2(\vx),\ldots,f_n(\vx)\}\right\}\\&= \left \{ (\vx,y)\in\R^s\times\R: \left(y\leq  f_1(\vx)\right) \lor \left(y\leq  f_2(\vx)\right)\lor \ldots  \left (y\leq  f_n(\vx)\right )\right\}\\&=\bigcup_{i=1}^n  \left\{ (\vx,y)\in\R^s\times\R: y\leq f_i(\vx)\right\}\\&= \bigcup_{i=1}^n \texttt{hyp}(f_i).
	\end{align*}	
	In the second equality, the operand $\lor$ denotes the logical OR operator. 	
\end{proof}	
Using the result of Theorem \ref{th1}, program (\ref{eqT1}) can be cast as finding a set $\Omega\subseteq \{1,2,\ldots,n\}$, which
\begin{align}
	\minimize_{\Omega~\subseteq ~ \{1,2,\ldots,n\}}&~~\int_D\mbox{d}\vx - \int_{\bigcup_{i\in\Omega} \texttt{hyp}(f_i) } \mbox{d}\vx~\mbox{d}y, \quad  \mbox{subject to:} ~~|\Omega| \leq T.\label{eqT2}
\end{align}
Here, $|\Omega|$ denotes the cardinality of the set $\Omega$. One may immediately observe that if $\Omega^*$ is a solution to \eqref{eqT2}, then the set of $\alpha_i^*$ quantities defined via
\[\alpha_i^* = \left\{\begin{array}{cc}1 & i\in\Omega^*\\ 0 & i\notin\Omega^* \end{array} \right., i\in[n], 
\]
forms a solution to \eqref{eqT1}. 

Program (\ref{eqT2}) can be further reshaped into a more standard form.  Consider $\pi_{f_i}(\vx,y)$ to be the indicator function of the set $\texttt{hyp}(f_i)$, i.e.,
\[\pi_{f_i}(\vx,y) = \left\{\begin{array}{lc} 1&(\vx,y)\in \texttt{hyp}(f_i) \\ 0 & (\vx,y)\notin \texttt{hyp}(f_i) \end{array}\right..
\]
It is straightforward to see that 
\[\bigcup_{i\in\Omega} \texttt{hyp}(f_i) = \supp \sum_{i\in\Omega} \pi_{f_i}(\vx,y),
\]
where $\supp$ denotes the support of a given function (the set of points on which the function takes nonzero values). While the functions $\pi_{f_i}(\vx,y)$ are binary-valued, the function $\sum_{i\in\Omega} \pi_{f_i}(\vx,y)$ can take integer values greater than one, depending on the level of overlap among the functions $\pi_{f_i}$. Using the function $c(u) = \min(u,1)$ to clip the values of $\sum_{i\in\Omega} \pi_{f_i}$  that are greater than one,  the objective in \eqref{eqT2} can be related to $\sum_{i\in\Omega} \pi_{f_i}$ via
\[\int_{\bigcup_{i\in\Omega} \texttt{hyp}(f_i) } \mbox{d}\vx~\mbox{d}y = \int_{D\times [0,1]} \min\left( \sum_{i\in\Omega} \pi_{f_i}(\vx,y),1\right)\mbox{d}\vx~\mbox{d}y.
\]
This again allows using a set of binary decision variables $\alpha_i$, to reformulate \eqref{eqT2} as
\begin{align}\notag
	\minimize_{\alpha_1,\ldots,\alpha_n}&~~\int_D\mbox{d}\vx - \int_{D\times [0,1]} \min\left( \sum_{i=1}^n \alpha_i \pi_{f_i}(\vx,y),1\right)\mbox{d}\vx~\mbox{d}y\\ \notag \mbox{subject to:}& ~~\sum_{i=1}^n\alpha_i \leq T\\ 
	& ~~\alpha_i\in\{0,1\}, ~~~i\in[n].\label{eqT3}
\end{align}
A major advantage of \eqref{eqT3} over the original program (\ref{eqT1}) is that the objective in \eqref{eqT3} is convex in $\alpha_i$, and a convexification of the combinatorial program is easily possible by relaxing the binary constraints to $0\leq\alpha_i\leq 1$. As a matter of fact, program (\ref{eqT3}) is a special case of the shape composition problem proposed and implemented in \cite{ aghasi2015convexa, aghasi2015convexb, redo2016terahertz,aghasi2018extracting}. Specifically, in \cite{aghasi2018extracting} the authors present a general set of sufficient conditions under which the convex relaxation accurately identifies the solution to the combinatorial problem. Furthermore, the convex relaxed program can be cast as a linear program (LP) and solved very efficiently. In the next section we will discuss the process of accurately solving \eqref{eqT3} by either solving an LP with binary constraints or an LP representing the relaxed program. 

\subsection{An LP Representation of the Problem}
Consider uniformly discretizing the domain $D\times[0,1]$ into $p$ voxels (or pixels in 2D): $(\vx_1,y_1),\ldots,(\vx_p,y_p)$. We can form a matrix $\bPi \in\mathbb{R}^{p\times n}$, where each column corresponds to a  discrete representation of  $\pi_{f_i}$. To reformulate \eqref{eqT3} as an LP, consider a variable $\bbeta\in\R^p$ with the entries 
\begin{align}\notag 
	\beta_j &= \min \left( \sum_{i=1}^n \alpha_i \pi_{f_i} (x_j,y_j),1\right)  \\ & = \min\left(\ve_j^\top \bPi \balpha ,1 \right),~~j\in[p],\label{eqT4}
\end{align}
where $\ve_j$ is the $j$-th standard basis vector. \eqref{eqT4} would naturally imply $\ve_j^\top \bPi \balpha\geq \beta_j$ and $\beta_j\leq 1$. Representing the integral with a sum and dropping the constant term $\int_D\mbox{d}\vx$), program (\ref{eqT3}) can now be cast as the following mixed binary program (MBP):
\begin{align}
	\minimize_{\balpha,\bbeta }&~ - \boldsymbol{1}^\top \bbeta ~~ ~ \mbox{subject to:}~ \left\{\begin{array}{l}\bbeta\leq \bPi\balpha ,  ~~~ \boldsymbol{1}^\top \balpha\leq T  \\ \bbeta\leq \boldsymbol{1},~~~~~~~ \alpha_i\in\{0,1\} ~~ i\in[n] \end{array} \right., \tag{MBP}\label{eqT5}
\end{align}
which offers the straightforward LP relaxation 
\begin{align}
	\minimize_{\balpha,\bbeta }&~~ - \boldsymbol{1}^\top \bbeta \quad  \mbox{subject to:}\quad \left\{\begin{array}{l}\bbeta\leq \bPi\balpha, ~~~ \boldsymbol{1}^\top \balpha\leq T  \\ \bbeta\leq \boldsymbol{1},~~~~ \boldsymbol{0}\leq \balpha\leq \boldsymbol{1} \end{array} \right.. \tag{LP}\label{eqT6}
\end{align}

We propose using Gurobi \cite{gurobi} to address \eqref{eqT5}. While Gurobi is capable of solving the combinatorial problem (\ref{eqT5}) using integer programming routines, our strategy is to start by solving \eqref{eqT6} first (which is computationally much faster than the original combinatorial problem). If the $\balpha$ component of the solution to \eqref{eqT6}  is binary, that solution automatically would correspond to \eqref{eqT5} as well, otherwise we try solving \eqref{eqT5} using the integer programming tools of Gurobi. 

A MATLAB implementation of our algorithm is available online. An overview of the implementation along with the link to the code is provided at the last section of the Supplementary Notes.  It is noteworthy that while our code handles both \eqref{eqT5} and \eqref{eqT6}, thanks to the tight relaxation, in all the experiments performed in this paper the LP relaxation produced binary solutions (which naturally correspond to the solutions of \eqref{eqT5}). Moreover, in the program overview, we have discussed a way to condense the matrix $\bPi$, without any change to the solution (see Section \ref{Supp:Condense} of the Supplementary Note \ref{Supp:secImplementation}). Even when $p$ (the number of voxels) is in order of millions, after condensing $\bPi$ our program can find a solution in fractions of a second, on a standard desktop computer. This is essential in applications such as real time foveated rendering in headmounted displays or lightfield displays. Meaning we can calculate which set of sparse depth to show to minimize the error per frame.   


\section{Assessment and Results}
We use the DoF profile patterns proposed in \cite{anderson2008minus}, which are representable in terms of the distance from the eye and the age. To generate a train of DoF profiles, we consider a uniform diopter spacing between the DoF planes, ranging between $D_{\min}$ and $D_{\max}$. The value of $D_{\max}$ is calculated based on \cite{anderson2008minus}, and we choose $D_{\min}$ to be 0.5 D (corresponding to the maximum visual range of 2 meters). This is a range for indoor visual activities such as monitor use. We have also performed similar experiments for $D_{\min}=0.09$ D (i.e., visual range expanded to 11.1 meters), which are mainly moved to the Supplementary Note \ref{Supp:secMoncular} to abide by the paper length guidelines. In the calculation of the DoF profiles, we set the pupil diameter to $p=$3 mm and 2 mm, where the former is set for an average 250 nits monitor at 66 cm working distance, and the latter is considered for high dynamic range (HDR) monitors. Finally, we consider different ways of weighting the profiles based on the age distribution of the users, and the average monitor distance setting.  

We use Figure \ref{figDoFs}, to present a schematic of the plain DoF profiles, and the different ways they are weighted based on the age and the office working distance. For all the plotted profiles in this figure we have used $p=3$ mm and $D_{\min} = 0.5$ D, which yield a DoF train with 151 profiles considering uniform 0.044 D increments in the diopter domain. The vertical axis is the normalized maximum image intensity profile plotted versus the distance captured from a delta function (bright spot) positioned at that distance. The horizontal axis on the left is the depth, and the horizontal axis on the right corresponds to the user's age. Since plotting all 151 profiles occupies the entire domain, in Figure \ref{figDoFs}(a) we have shown a sparse subset of the DoF profiles with only 14 profiles uniformly picked from the original train.

\begin{figure}[htb!]
	\centering \begin{overpic}[trim={.5cm .7cm  .4cm .1cm},clip,width=4in]{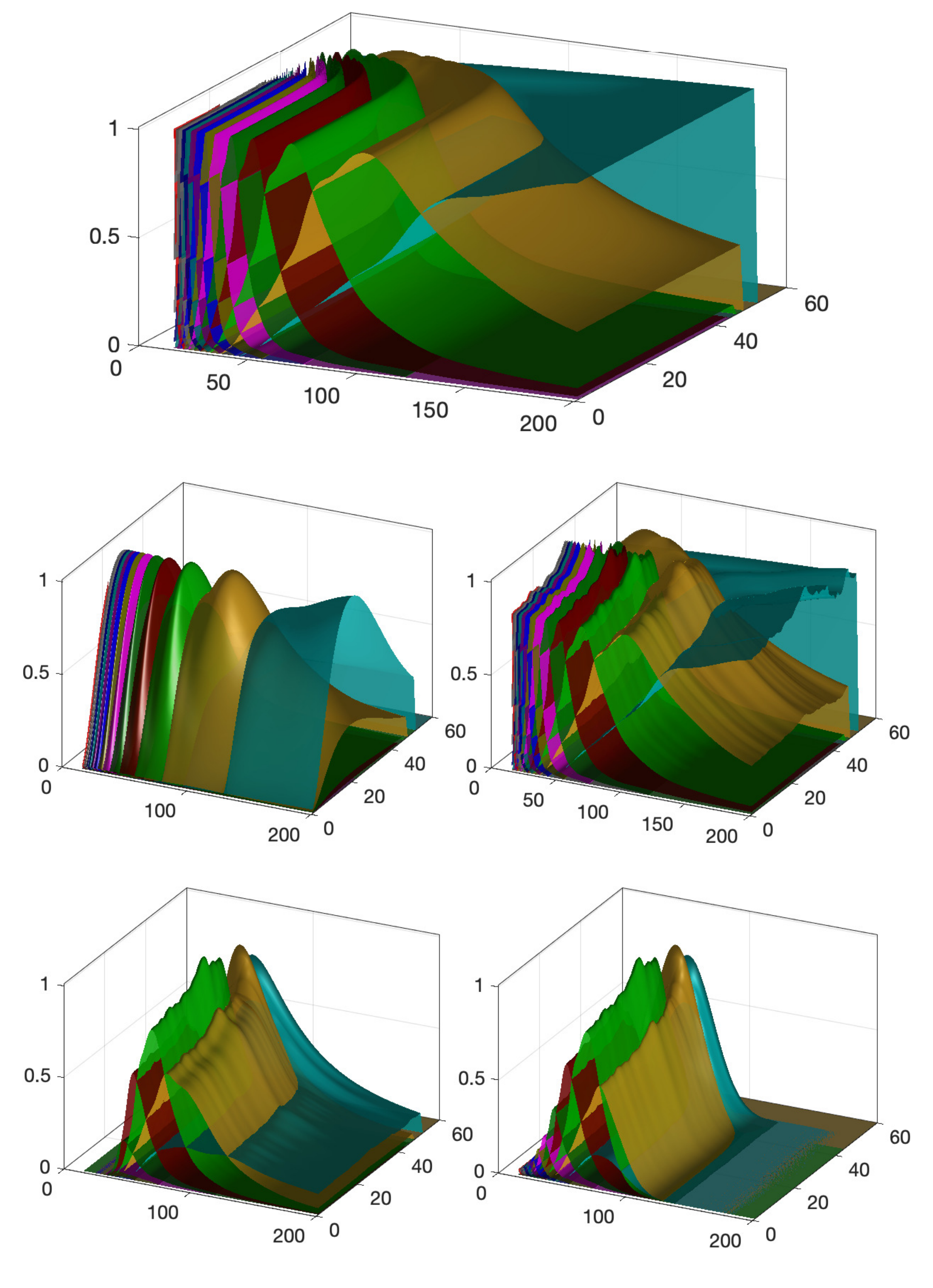}
		\put (35,62.5) {\scalebox{.8}{(a)}} 
		\put (15,31) {\scalebox{.8}{(b)}} 
		\put (50,31) {\scalebox{.8}{(c)}} 
		\put (15,-1.5) {\scalebox{.8}{(d)}} 
		\put (50,-1.5) {\scalebox{.8}{(e)}} 
		\put (3,72.5) {\rotatebox{90}{\scalebox{.7}{Normalized Intensity}}} 
		\put (20,65) {\rotatebox{0}{\scalebox{.7}{Depth (cm)}}} 
		\put (56,69) {\rotatebox{0}{\scalebox{.7}{Age}}} 
	\end{overpic}
	\vspace{.1cm}
	\caption{A schematic of 14 DoF profiles uniformly selected from a train of 151 profiles calculated for $p=3$ mm, with uniform diopter spacing between $D_{\min} = 0.5$ D and $D_{\max} = 7.08$ D. The right horizontal axis is age and the left horizontal axis is distance in mm. (a): The plain profiles without any weight on the depth or age. (b): The profiles after weighting the age axis by an empirical Gamma distribution. (c): The profiles after weighting the age axis by the US population. (d): Using the US population to weight the age, and a Gaussian weight in the diopter range. (e): The DoF profiles after weighting the age component by the US population and the depth range by a Gaussian.}\label{figDoFs}
\end{figure}

 Figure \ref{figDoFs}(b) shows the result of weighting the profiles by a desired target age distribution for which the display is being designed. Specifically, the users' age is considered to obey a Gamma distribution with a shape parameter $k=3$ and a scale parameter $\theta = 10$. This would correspond to a skewed bell-shaped distribution with mean 30 and a standard deviation 17.3, which models  the distribution that we empirically found for the users' age.  Figure \ref{figDoFs}(c) shows an alternative weighting of the profiles by age, which uses the US population, obtained from \cite{Census2019Age}.  Figure \ref{figDoFs}(d) shows a similar setting as  Figure \ref{figDoFs}(c), where aside from using the US population weight for the age, the diopter range is weighted by a Gaussian distribution with mean 1.5 D and standard deviation 0.5 D. Notice that such distribution introduces a reciprocal normal distribution weight on the depth axis. Finally,  Figure \ref{figDoFs}(e) shows a similar setting as  Figure \ref{figDoFs}(c), where the depth axis is weighted by a Gaussian with mean 66 cm and standard deviation 20 cm representing an average monitor distance setting.

We apply the optimization scheme proposed in Section \ref{section:math} to allocate the DoF planes for minimizing the accommodation error. Figure \ref{figOpt} shows the result of this optimization for different values of $T$, and for the parameter setting outlined in Figure \ref{figDoFs}. These results show, exactly, how the depth levels should be allocated for a band limited display starting from $T= 1$ monocular depth level, all the way to $T= 9$  levels, such that the accommodation error considering different factors such as age profile, working distance profile and brightness of the display is minimized. 

The optimized locations for the plain DoF train without a weight on the age or the depth are plotted with green circles as the background throughout all the sub-figures, to better help a comparison as different factors kick in. The red cross marks in Figure \ref{figOpt}(a) show the optimized locations when the age component of the profiles is weighted by a Gamma distribution as outlined above (a display product made for average age of 30 and standard deviation of 17.3). As noted, the target age is pushing the depth levels to shorter distances for smaller number of planes favoring younger eyes, but for larger number of planes the impact becomes more and more negligible.  

\begin{figure}[htb!]
	\centering \begin{overpic}[trim={2.5cm 2cm  2.cm 1.3cm},clip,width=4in]{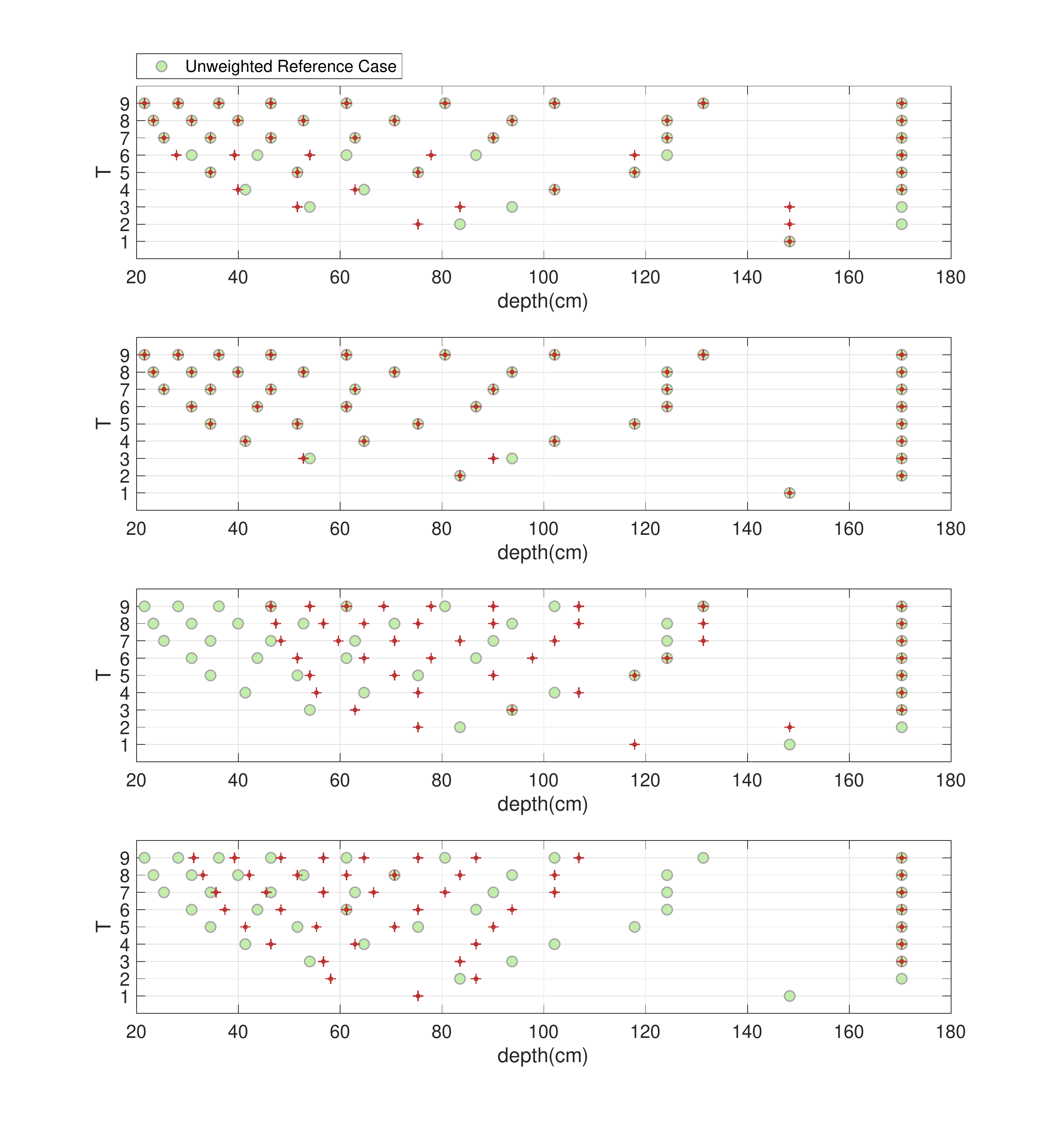}
		\put (34,73.5) {\scalebox{.8}{(a)}} 
		\put (34,48.5) {\scalebox{.8}{(b)}} 
		\put (34,24.5) {\scalebox{.8}{(c)}} 
		\put (34,-1.5) {\scalebox{.8}{(d)}} 
	\end{overpic}
	\vspace{.1cm}
	\caption{The optimized DoF plane allocation results for different values of $T$, and the parameter settings outlined in Figure \ref{figDoFs}. The optimized allocations for the plain DoFs without a weight on the age or the depth are plotted with green circles as the background in all the panels. The red crosses indicate the allocations after applying different age/depth weights. (a): The allocation after weighting the age axis by an empirical Gamma distribution. (b): The allocation after weighting the age axis by the US population. (c): Optimal allocation, using the US population to weight the age, and a Gaussian weight in the diopter range. (d): The allocation after weighting the age by the US population and the depth range by the Gaussian with center at 66 cm.
	}\label{figOpt}
\end{figure}

Figure \ref{figOpt}(b), corresponds to the weighting of the age by the US population impacting diopter ranges. The distribution is relatively flat, therefore, not much impact is noticed compared to a uniform age distribution. In a similar fashion,  Figure \ref{figOpt}(c) and (d) correspond to the settings described in panels (d) and (e) of Figure \ref{figDoFs}. As noted, we have a major shift of depth allocation in Figure \ref{figOpt}(c) because of the diopter range that we are targeting around 1.5 D and distance range that we are targeting around 66 cm. What is important here is the transition between the first three depth levels. It is evident that the population profile impact is less pronounced, if the display is designed for a certain working range. Another notable observation is how the optimization leans toward longer distances in lower number of depth levels. This figure  indicates that, for example, if there is a VR or an AR headset with two monocular depth levels designed for mass population, then a depth level around 170 cm, and the other around 83 cm, together introduce the minimal amount of accommodation error (panel(b)). However, if this headset is expected to represent a virtual monitor setup at 66 cm distance, then based on panel (d) the optimal depth choices are 58 cm and 87 cm. 

Figure \ref{figRefComp} shows a comparison of unweighted optimized levels for the case of a normal display (assumption of 3 mm pupil diameter) and an HDR enabled display with over 250 nits brightness (assuming a 2 mm pupil diameter). In reality the display may have a high dynamic range, so the pupil size varies based on the brightness of the content shown. Since for brighter contents the pupil diameter is smaller and the DoF is larger, therefore it is expected that less number of depth levels are needed to cover a desired diopter range. Here we are assuming that the content is so bright that the pupil diameter is fixed at 2 mm at all times. As seen, the depth levels become more spread apart, indicating a larger coverage per level. 

\begin{figure}[htb!]
	\centering \begin{overpic}[trim={.1cm .1cm  .1cm .1cm},clip,width=4.in]{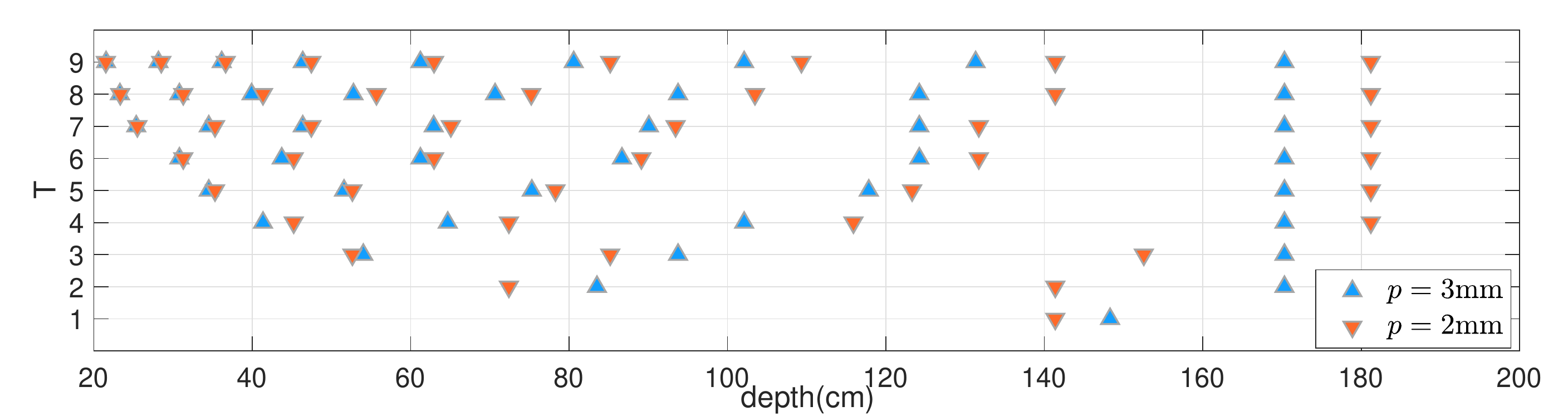}
	\end{overpic}
	\vspace{-.1cm}
	\caption{A comparison of the optimized DoF allocation for the unweighted reference cases, with $p=2$ mm and $p=3$ mm.}\label{figRefComp}
\end{figure}

Figure \ref{figOptHDR} a shows the impact of different factors on the allocation of the depth levels for 2 mm pupil diameter (HDR, or brighter displays). Figure \ref{figOptHDR}(a) shows the impact of age profiling (an average age of 30, and a standard deviation of 17.3). Figure \ref{figOptHDR}(b) shows the results for average US population, which is relatively uniform. As noted in both  panels (a) and (b), up to the three first levels, the age range almost has no impact in the optimized allocation of the depth levels. This is a significant observation, as one would think that because of dominant farsightedness in older population, one would need to shift the depth to further distances. For example, one would expect that allocating a depth level at 52 cm, when the bandwidth limits the depth levels to $T= 3$, would be undesired since almost all people after the age of 50 would lose acuity at that distance. But this optimization shows that such counter intuitive allocation still reduces the cost of accommodations error across the entire population. The population profile ultimately only slightly skews the optimized depth level allocations at higher $T$ values. Figures \ref{figOptHDR}(c) and (d) show the impact of distance and diopter targeting, respectively (similar depth and diopter profile considered as Figure \ref{figOpt}(c) and (d)). Because of the longer DoF at higher brightness, here again the impact of the profiling is less significant compared to pupil diameter of 3 mm. 

\begin{figure}[htb!]
	\centering \begin{overpic}[trim={2.5cm 2cm  2.cm 1.3cm},clip,width=4.5in]{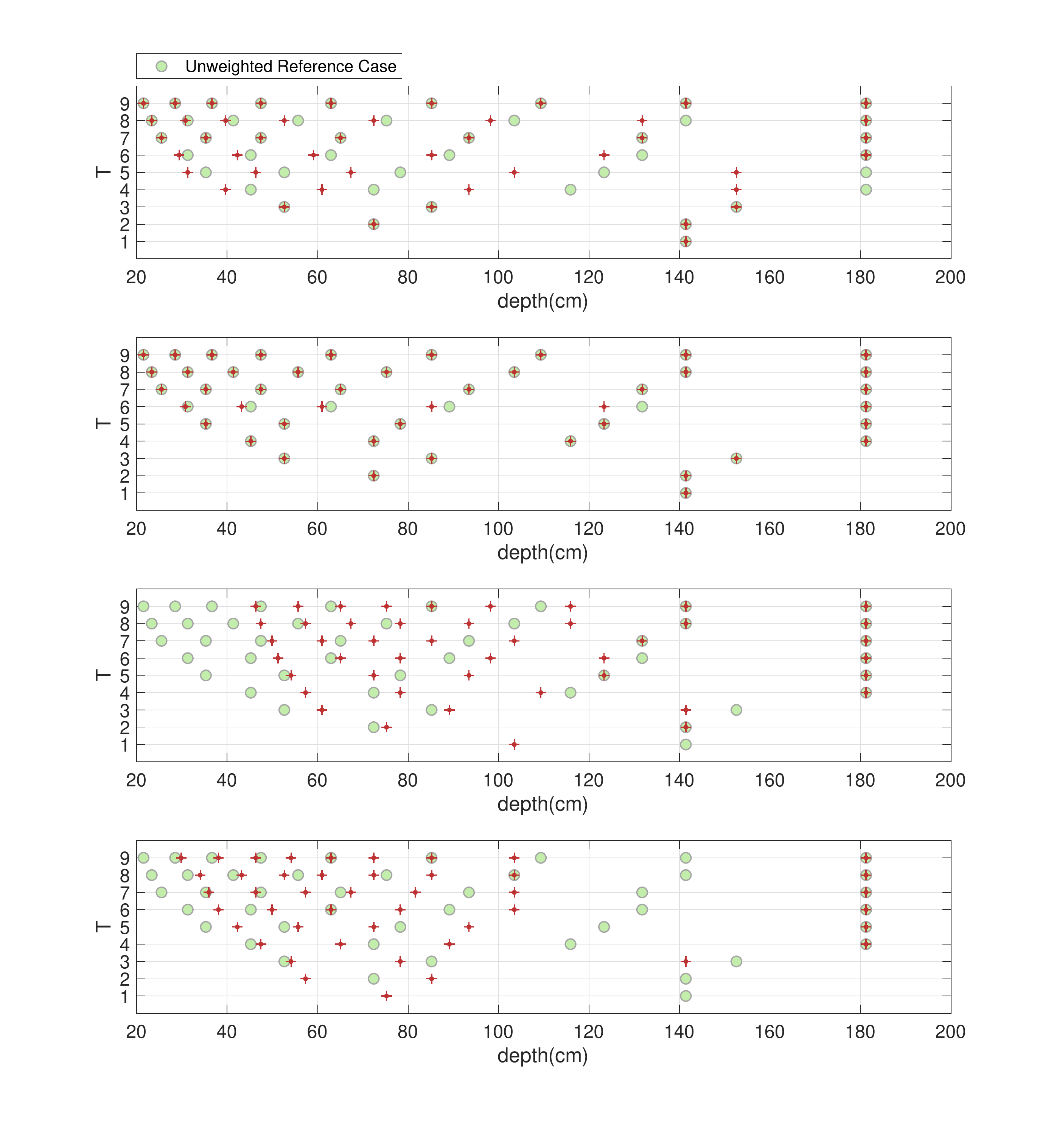}
		\put (34,73.5) {\scalebox{.8}{(a)}} 
		\put (34,48.5) {\scalebox{.8}{(b)}} 
		\put (34,24.5) {\scalebox{.8}{(c)}} 
		\put (34,-1.5) {\scalebox{.8}{(d)}} 
	\end{overpic}
	\vspace{.1cm}
	\caption{Optimized allocation results similar to the cases indicated in Figure \ref{figOpt}, with the pupil diameter set to $p=$2mm.}\label{figOptHDR}
\end{figure}

Similar analysis is done for a larger diopter (depth) range that goes all the way to 0.09 D (11.1 meters).  The results are shown in the Supplementary Figure \ref{Supp:figOpt}. In such a long range the optimization dominantly favors the last level in the range, in its selection to minimize the error. However, for $T=1$, the optimized allocation (not impacted by the age or application) is at 5.6 m (and 7 m for $p=2$ mm).   This is rather different compared to the 1.48 (1.41) meter distance when the optimization is ran only up to 2 meters. Certainly, if one expects the user to see mostly nearby 3D objects (as is the case of AR and VR), then the 2 meter range optimization is a better fit. If a display is desired to give the minimum accommodation error along the largest distance range all the way to infinity, then setting that first level to 5.5 meter is the best bet for an average 250 nits display. Supplementary Figure \ref{Supp:figGuy2} shows these depth levels compared to an average male body in one-to-one scale.

Figures \ref{figComp}(a-d) show the difference between the coverage errors for the different values of $p$ and $D_{\min}$ discussed in the paper. Coverage error is the accommodation error that is calculated between a continuum of accommodation (the real world), and what the eye perceives based on the quantization of accommodation, considering the DoF profile at each level. Mathematically, this error corresponds to the portion of the area (volume) of the box that is not covered by the union of the DoF profiles. Considering the iterative method (mentioned in Fig  \ref{fig1}), based on the shallowest measured DoF of 0.15 D at the largest pupil diameter of 6 mm, the human eye does not really discern a continuum of depth due to the limited acuity. Therefore, there is an intrinsic accommodation error in the image formed at the back of the eye at the highest acuity region of the fovea, which is not (at least based on psychophysics literature) perceivable to the human eye. One can consider this intrinsic error as the base error level of the human eye optics.  This intrinsic error is about 16.7\% of 2 meter range and about 1.8\% of the 11.1 meter range in distance space.

Here the blue bars show the coverage error for an approach where the depth levels are allocated by dividing the range to equal distances and red bars show the coverage error for optimized allocation. The horizontal axis is the number of depth levels, $T$. By comparing the results from optimized allocation of depth to the equidistant allocation, it is evident that the optimization is significantly reducing the coverage error. At the two meters distance range optimization (Panels (a) and (c) of Figure \ref{figComp}), the  error reduction may not seem as significant at the first glance, due to the existence of intrinsic errors. However, a closer look shows that for example with this optimized allocation for only $T=3$ depth levels, the coverage error is on par or better than all the 9 depth levels using equidistant allocation. This is a factor of three reduction in the needed number of layers based on this metric. The results become even more significant for longer range (11.1m) distance optimization, where clearly the coverage error is reduced up to factor of five for $T=9$ depth levels and only 2-3 level of optimized allocation can be on par with 9 equidistant allocations. This is a hard proof that uniform equidistant allocation of depth in diopter domain is a waste of bandwidth in lightfield displays with monocular depth, and such allocation becomes even more inefficient for larger number of layers. Also considering the intrinsic coverage error (16.7\% for panels (a), (c) and 1.8\% for panels (b), (d)) one can see that with only 6-7 number of optimized allocated depths, the coverage error is already starting to become unperceivable. What this means is that on average the lack of sharpness in the perceived image as a result of accommodation error becomes on par with the intrinsic lack of sharpness in the image due to DoF of the human eye at the given pupil diameter.

\begin{figure}[htb!]
	\centering \begin{overpic}[trim={.1cm .1cm  .1cm .1cm},clip,width=4.5in]{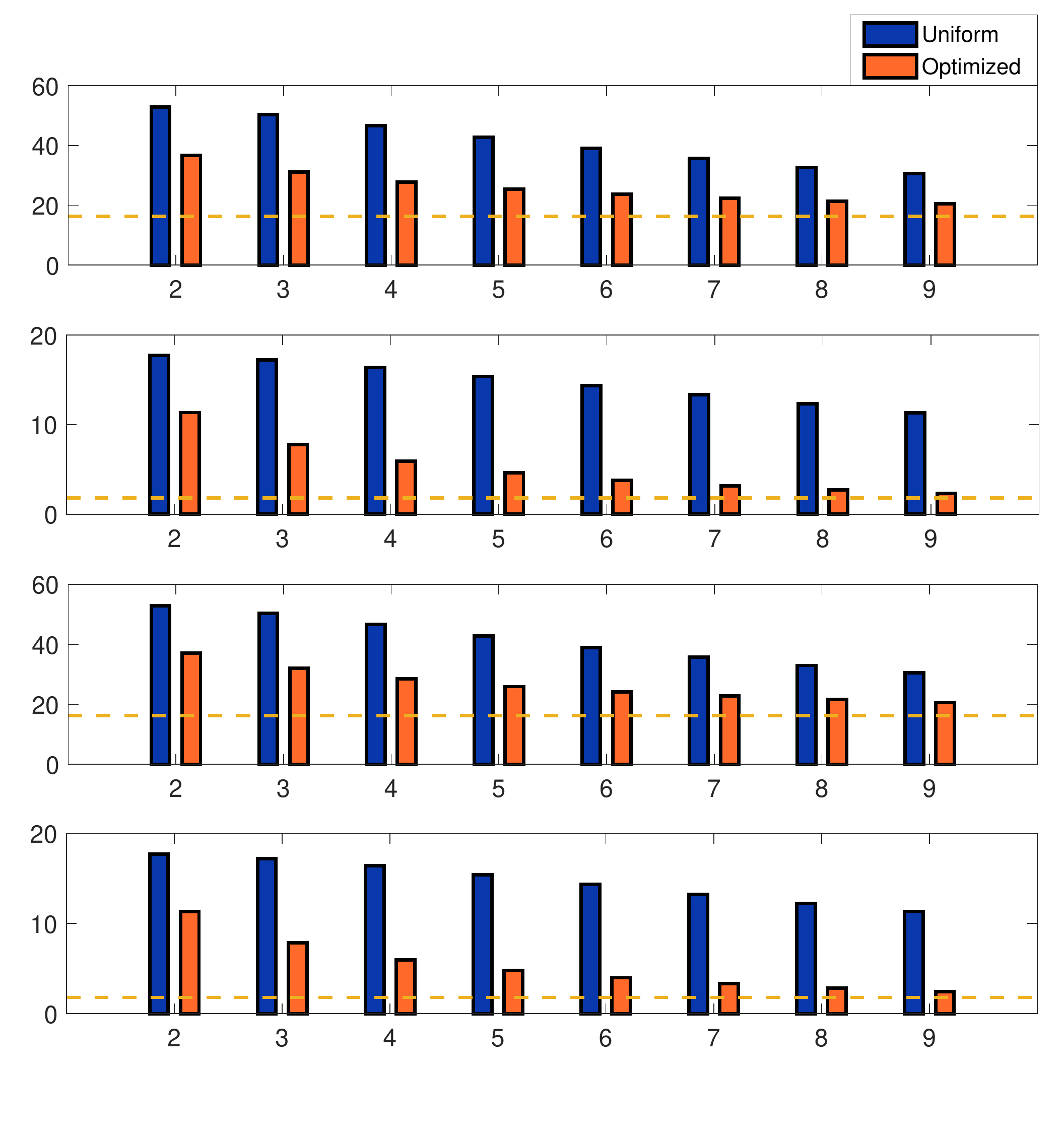}
		\put (48,72) {\scalebox{.8}{(a)}} 
		\put (84.5,82) {\scalebox{.7}{{\color{red} 16.7\%}}} 
		\put (48,50) {\scalebox{.8}{(b)}} 
		\put (85,57.2) {\scalebox{.7}{{\color{red} 1.8\%}}} 
		\put (48,28) {\scalebox{.8}{(c)}} 
		\put (84.5,38) {\scalebox{.7}{{\color{red} 16.7\%}}} 
		\put (48,1) {\scalebox{.8}{(d)}} 
		\put (85,13) {\scalebox{.7}{{\color{red} 1.8\%}}} 
		\put (45,5) {\scalebox{.7}{Depth Level ($T$)}} 
		\put (-1,45.5) {\rotatebox{90}{\scalebox{.7}{Coverage error (\%)}}} 
	\end{overpic}
	\vspace{-.4cm}
	\caption{Comparison between the coverage errors of equidistant DoF plane allocation in red, and the optimized allocation in blue (no weighting applied to the age or depth). The dashed red lines show the intrinsic errors along with their values. The setups are: (a) $p=$3 mm, $D_{\min} =$0.5D; (b) $p=$3 mm, $D_{\min} =$0.09D;
		(c) $p=$2 mm, $D_{\min} =$0.5D;
		(d) $p=$2 mm, $D_{\min} =$0.09D. }\label{figComp}
	
	
	
	
\end{figure}

\section{Discussion}
\subsection{Quantized monocular depth levels in 3D space }
Another parameter of interest is the aberration profile of the eye for defining 3D shape of the monocular depth levels. This is especially more interesting for applications like foveated rendering in lightfield displays or head mounted displays, where the effort is to reduce the rendering computational cost by adapting to the eye nonuniform spatial acuity profile in 3D space. While there are thorough understanding of eye aberration and point spread function (PSF) profile \cite{wang2012statistical, carvalho2005accuracy, carvalho2002measuring}, it is difficult to theoretically pinpoint the single eye depth profile in the $(x,y)$ for each level. This is because the sampling outside the fovea becomes exponentially less dense, so defining a quantization parameter based on a single objective contrast sensitivity parameter becomes a multi-variable function of $(x,y,z)$ \cite{watson2014formula, navarro1998monochromatic, wong2015interpret}.

In case one considers the eye rotation (regardless of the type of eye movement), it can be roughly assumed that the monocular depth levels are on hemispherical surfaces with the eye rotation center at the center of the hemispheres. Obviously these spherical surfaces are then trimmed by the horizontal and vertical field of view of the eye into an irregular shape that varies based on structure of each person's nose. For fixated eye on this sphere, the monocular DoF gets larger and larger as the angle is increased from the fovea to parafovea to perifovea passing the macula boundary to near, mid, and far peripheral vision areas. Following a typical perimetry result for the eye visual field (the normal hill of vision) the monocular depth profile would be similar to a horn torus shape with concave sides hitting the minimum at the fovea region and a small hole at the place of the optic nerve head (Figure \ref{fig8} (a)) .  Certainly, there are many parameters such as the eye aberration; pupil diameter and retina curvature that will impact this shape, but that level of accuracy is most likely not needed to be considered for designing a band limited display and thus is not in the focus of our study. The mathematical formulation presented in Section \ref{section:math} is general for any number of dimensions. Hence, based on the 3D monocular depth profile, one can increase the number of variables in our optimization approach to find the optimal allocations in 3D space and find the coverage error in higher dimensions. 

\begin{figure}[!ht]\vspace{-20pt}
\centering
	\begin{overpic}[trim={4.3cm 3cm  2.cm 2cm},clip,width=2.18in]{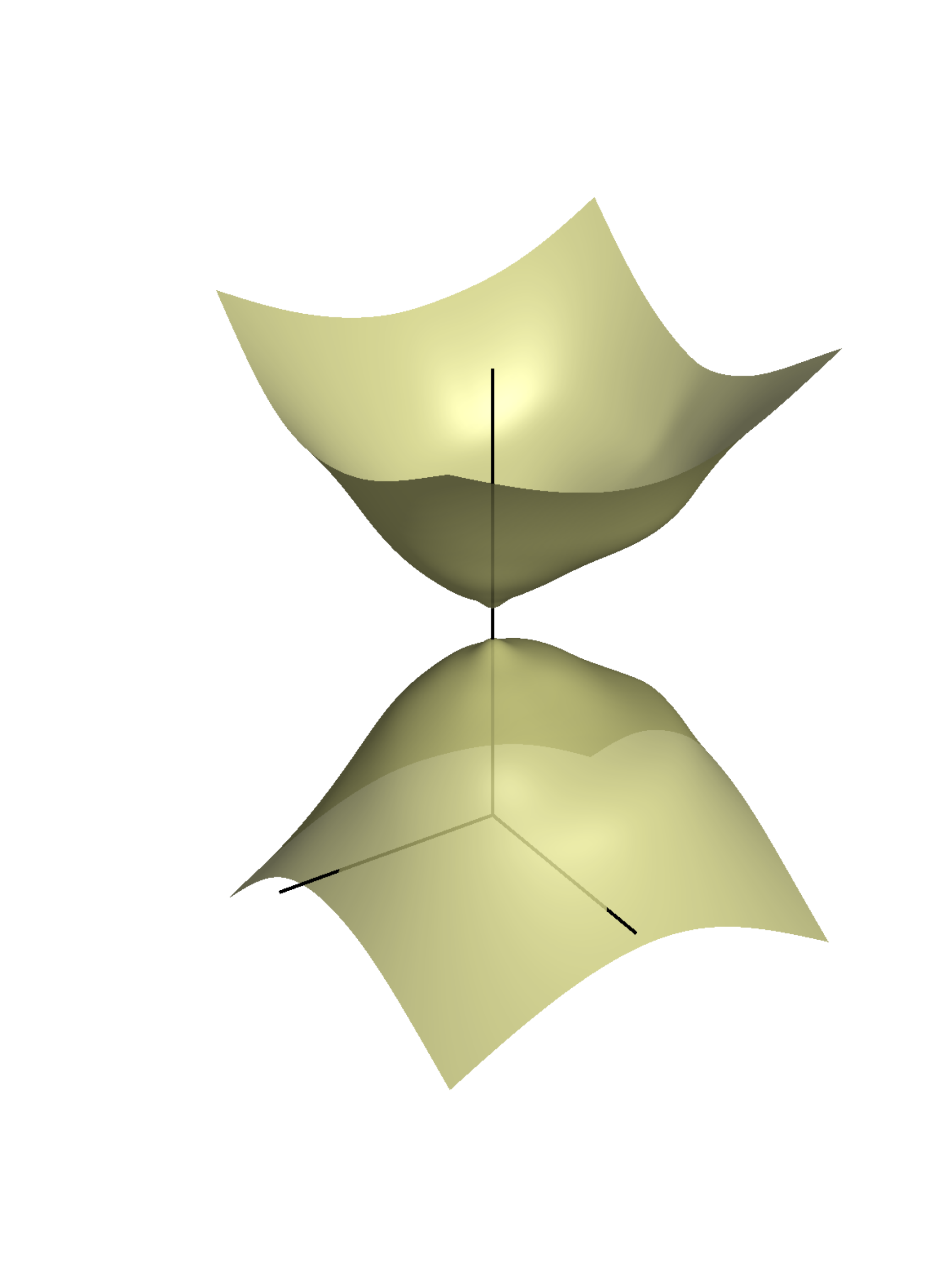}
	\end{overpic}
	\begin{overpic}[trim={.1cm .1cm  .1cm .1cm},clip,width=2.38in]{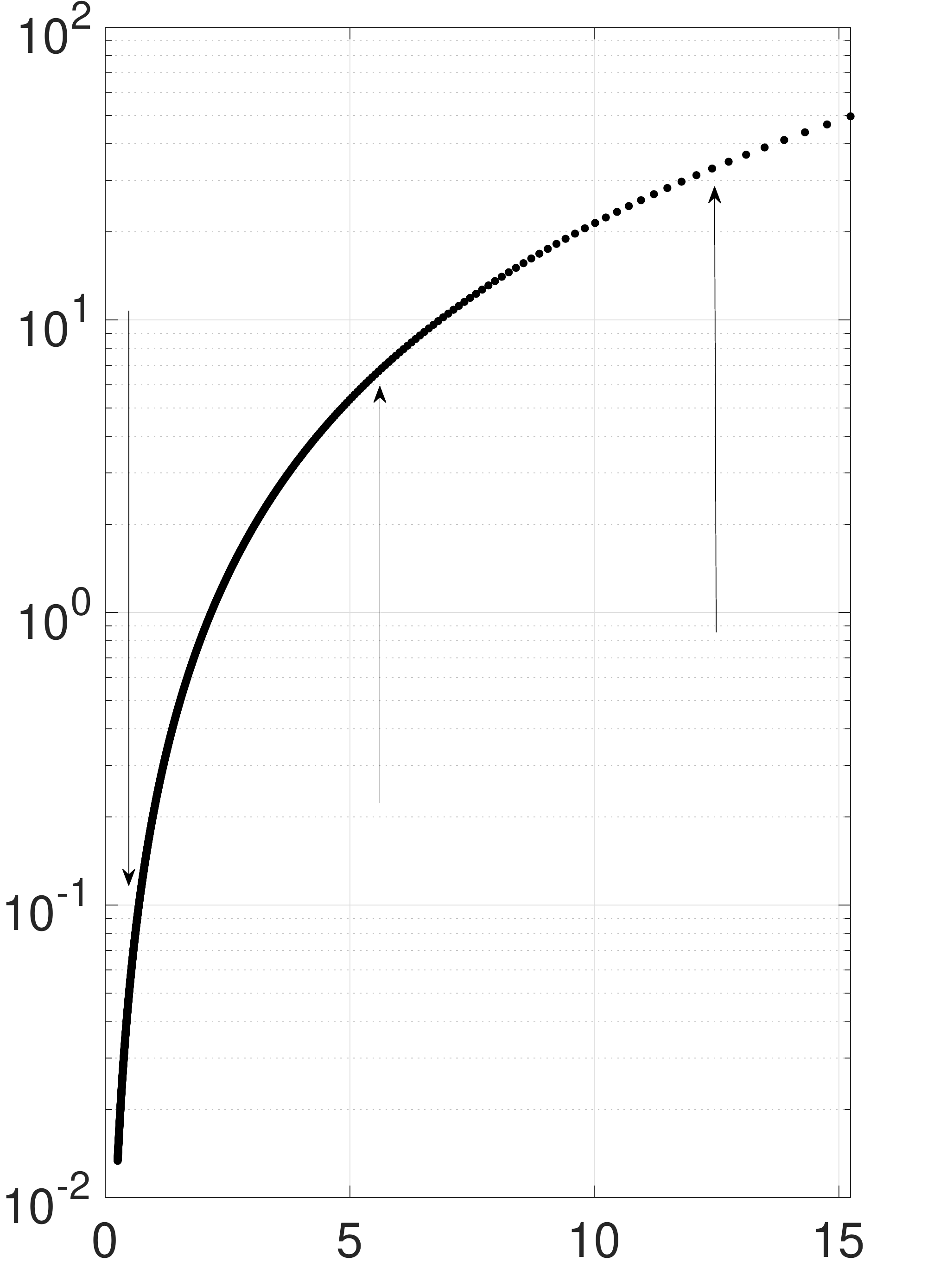}
	\end{overpic}
	\\[.45cm]
	\hspace{-.1cm}\begin{overpic}[trim={1cm 1.2cm  .5cm 1.2cm},clip,width=4.5in]{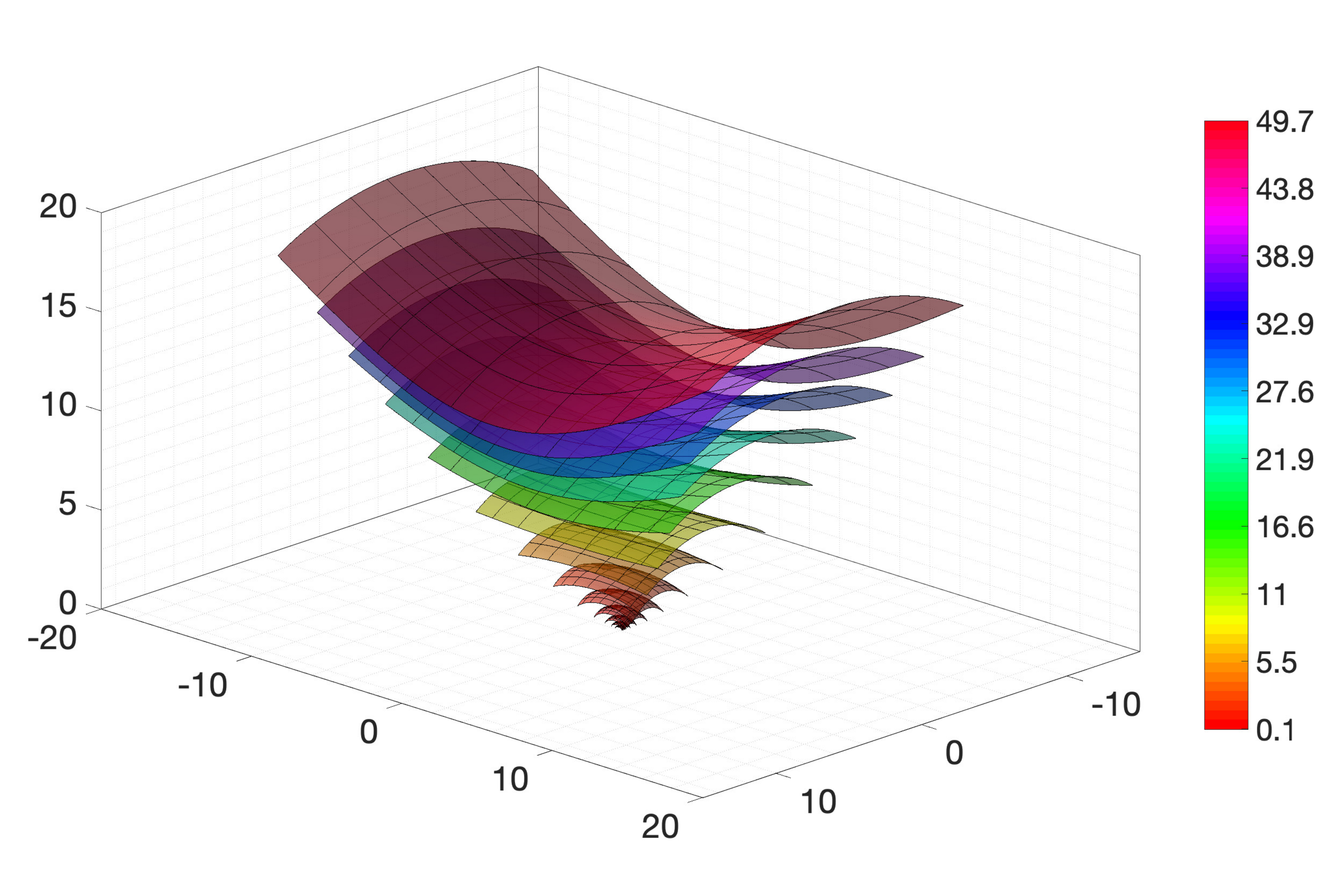}
		\put (23,63) {\scalebox{.8}{(a)}} 
		\put (73,63) {\scalebox{.8}{(b)}} 
		\put (48,-3) {\scalebox{.8}{(c)}} 
		\put (-1,83) {\scalebox{.8}{$x$}} 
		\put (33,81) {\scalebox{.8}{$y$}} 
		\put (15,125) {\scalebox{.8}{$z$}} 
		\put (60,66) {\scalebox{.7}{Depth of horopter (m)}} 
		\put (47,93.5) {\rotatebox{90}{\scalebox{.7}{Discernible $\Delta z$ (cm)}}} 
		\put (9,7) {\scalebox{.7}{$x$ coordinate}}
		\put (6,4) {\scalebox{.7}{at the frontal plane (m)}} 
		\put (-3,24.5) {\rotatebox{90}{\scalebox{.7}{Depth of horopter (m)}}} 
		\put (57,120.5) {\rotatebox{90}{\scalebox{.7}{PSF dominates}}}
		\put (60,123.5) {\rotatebox{90}{\scalebox{.7}{the acuity}} }
		\put (71,83.5) {\rotatebox{90}{\scalebox{.7}{mid range depth}} }
		\put (89.5,80.5) {\rotatebox{90}{\scalebox{.7}{long depth range dominated }}}
		\put (92.5,79.5) {\rotatebox{90}{\scalebox{.7}{by negative super linear error}}} 
		
	\end{overpic}
	\caption{(a): A schematic of the Traquair hill of vision. (b): Stereoscopic distinguishable depth variation in longitudinal horopter vs depth of observation distance.  (c): The 3D horopters in space based on empirical models from Ogle and Helmholtz measurements \cite{ogle1932analytical, ogle1956stereoscopic, ames1932corresponding} . Color indicates the depth change in centimeters that is distinguishable on that horopter. 
	}
	\label{fig8}
\end{figure}

\subsection{Generalization to binocular depth resolution}

Similar to the monocular depth in Figure \ref{fig1}, using horizontal angular disparity $\delta$ and its relation with interpupillary distance (IPD) and depth, one can iteratively find the binocular depth levels (horopters) and use our optimization approach to allocate limited number of horopters for a band limited stereoscopic display. This can be used to compress the stereoscopic data not based on the data redundancy, but based on the human eye vision limits. The maximum stereoscopic acuity is reported from 0.167 arcmin max to 0.5 arcmin average in the literature \cite{larson1983stereoscopic,palmisano2010stereoscopic}. However this acuity is variant with angle with regards to the center line that is perpendicular to the face plane. This is why the horopter levels are different at different angles.  Assuming the experimental stereoscopic acuity as 0.5 arcmin, and the average pupil distance \cite{larson1983stereoscopic,palmisano2010stereoscopic}, $I$, equals 64 mm, at each observation distance, $z$, we calculate the physical depth between two closest objects that are detected to be at different depths (see Supplementary Note \ref{Supp:Helmoltz}).  We find the next observation distance or horopter level by substituting the distance with $z+\Delta z$. The iteration starts from 25 cm and terminates when the observation distance is over 15 m. 

As shown in Figure \ref{fig8} (b), human eyes are very sensitive at differentiating the depth levels on the order of sub-millimeter when the observation distance is near (less than 66 cm). This is only for the center line of the stereoscopic depth levels or horopters at maximum acuity. Based on this iterative simulation, one can find the center point of the stereoscopic depth levels, their relative location and their total number. The total number of levels from 25 cm to 15m is 1731 (for this IPD = 64 mm) and based on ANSUR IPD data \cite{dodgson2004variation} related to different races this number varies with 7.7\% between races. There is minimal effect from monocular DoF on the number of these levels in the 25 cm-15 m range \cite{larson1983stereoscopic}. If one ignores the effect of eye PSF \cite{ginis2012wide} down to 15 cm to the face, then the number of levels is increased to 2905 from 15 cm to infinity, but as pointed in the literature the shallowest depth that is detectable based on human vernier acuity (hyper acuity) cannot be smaller than 100 micron \cite{lit1959depth, westheimer1990contrast}. This will cap the depth levels to 2667. Above the 15 m distance, stereoscopic depth perception shows super linear accumulative negative error \cite{palmisano2010stereoscopic}. For example at 15 m this error is about -5 m so the subjects detect the 15 m to be closer around 10 m and at 31 m they perceived the depth to be around 18 m and at 248 they perceived the depth to be only 50 m. These data indicate that there are not many more depth levels that can be quantized after 15 m all the way to infinity, at least based on the non-contextual monocular or stereoscopic cues. Figure \ref{fig8} (b) further informs these three different regimes of stereoscopic depth perception (near distances dominated by PSF, mid range that is iteratively quantized, far that has negative accumulative error).  Supplementary Figure \ref{Supp:figGuy3}(a) shows these depth levels at one-to-one scale with regards to an average 40 years old, male anatomy. 

\subsection{Quantized binocular depth levels in 3D space }
In order to find the 3D spatial profile of binocular depth levels we used the Ogle \cite{ogle1932analytical, ogle1956stereoscopic} equation for horopters along with the Helmholtz experimental results as in Supplementary \eqref{HelmEq}  \cite{ames1932corresponding}. This estimation, unlike the Vieth-Muller circle theory, considers the Hering-Hillebrand deviation. To find all $H$ values at an arbitrary distance, we performed a linear fit with respect to the dioptric distance. The resulted shape of horopters are shown in  Figure \ref{fig8}(c) (see the supplementary video for full 3D renders of the horopters and the Supplementary Figure \ref{Supp:figGuy3}(b)). Horopters get denser as the distance becomes closer to the face. At shorter observation distances, horopters take the convex shape and gradually become concave on x axis while still keeping a convex shape on y.  Based on this result it i
s evident that a flat-screen or virtual image is significantly sub-optimal.  A more accurate empirical measurement is recently done in \cite{gibaldi2019binocular,gibaldi2017active} to study the different mechanisms of saccade and vergence from a neurological view point.

Finally, in order to experimentally validate the iterative depth quantization criteria in this study a Badal setup as in Supplementary Figure \ref{Supp:figExp} can be used to measure the number of monocular depth levels for subjects. The experiment has to run on different age groups and at different lighting conditions to testify the optimized allocation of the levels. This is a topic of our future studies. One can expand the same approach for allocation of binocular horopters.

\section{Conclusion}
We defined a method to quantize and allocate monocular depth levels in an optimized fashion. The method sets a fundamental guideline for designing 3D displays based on human visual perception capacity. From display science perspective, this is essentially equivalent to human eye depth resolution. The iterative depth allocation approach results in maximum 40 monocular depth levels and 2667 binocular levels which saturate even the best and youngest eye at most demanding experimental conditions possible. This maximum quantized discernible levels quickly falls to 15 monocular levels and 1731 binocular spread from 25 cm to 15m for 3 mm pupil diameter and IPD of 64 mm. To optimally allocate monocular depth levels to a lightfield or a 3D display, we manage to cast the problem to an integer program with a computationally accurate and efficient convex relaxation. In some cases, the results beat an equidistant allocation of monocular depth  levels in diopter space by a factor of 5. Variety of design parameters are studied through this study. Our method shows that with only 6-7 quantized monocular depth levels allocated optimally for an AR/VR headset or a lightfield display application one can have an on-par error with intrinsic error of human eye. The method goes on further to pin point the location of these depth levels for varieties of desired scenarios.

\section*{Acknowledgements}
The authors acknowledge Adam Samaniego, Dr. Youngeun Park, and Dr. Stefano Baldasi from the analytic department at former Meta Co. for their consultation.

\section{Supplementary Notes}
\subsection{Supplementary Note 1: Monocular depth allocation for longer range coverage }\label{Supp:secMoncular}
This section presents some more allocation results for longer range, where the minimum diopter range is set to $D_{\min}=0.09$ D (i.e., visual range expanded to 11.1 meters). As before, the value of $D_{\max}$ is calculated based on \cite{anderson2008minus}. Similarly, two different scenarios in terms of the pupil diameter are considered: $p=$3 mm and 2 mm, where the former is set for a standard monitor, and the latter is considered for a high dynamic range (HDR) monitor.

The optimization scheme proposed in Section \ref{section:math} is applied to allocate the DoF planes for minimizing the accommodation error. The left and right columns in Figure \ref{Supp:figOpt} show the result of this optimization for different values of $T$. The left column corresponds to $p=3$ mm, whereas on the right column $p=2$ mm is considered. For both columns, the first row  considers plain profiles without any weight on the depth or age, the second row considers weighting the age by a Gamma distribution, and the third and fourth rows correspond to the weighting of the age by the US population, and a Gaussian weighting in the diopter range and depth range, respectively. The parameters of the Gamma and Gaussian distributions used for weighting are identical to those used for Figure \ref{figOpt}.

\begin{figure}[htb!]
	\centering \begin{overpic}[trim={1.5cm 2cm  1.3cm 1.3cm},clip,width=2.63in]{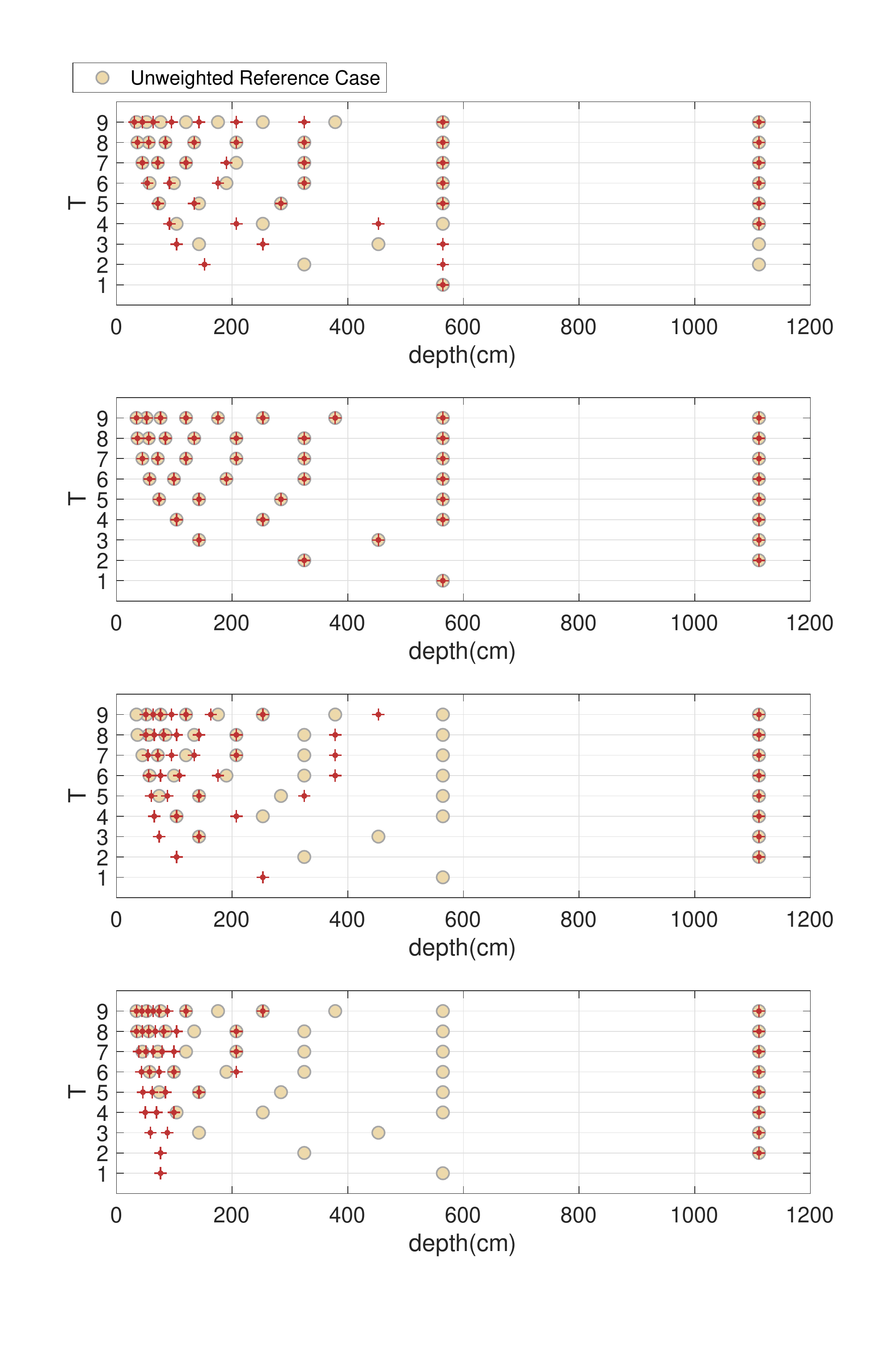}
		\put (25,73.5) {\scalebox{.8}{(a)}} 
		\put (25,48.5) {\scalebox{.8}{(b)}} 
		\put (25,24.5) {\scalebox{.8}{(c)}} 
		\put (25,-1.5) {\scalebox{.8}{(d)}} 
	\end{overpic}\hspace{-.1cm}
	\centering \begin{overpic}[trim={1.5cm 2cm  1.3cm 1.3cm},clip,width=2.63in]{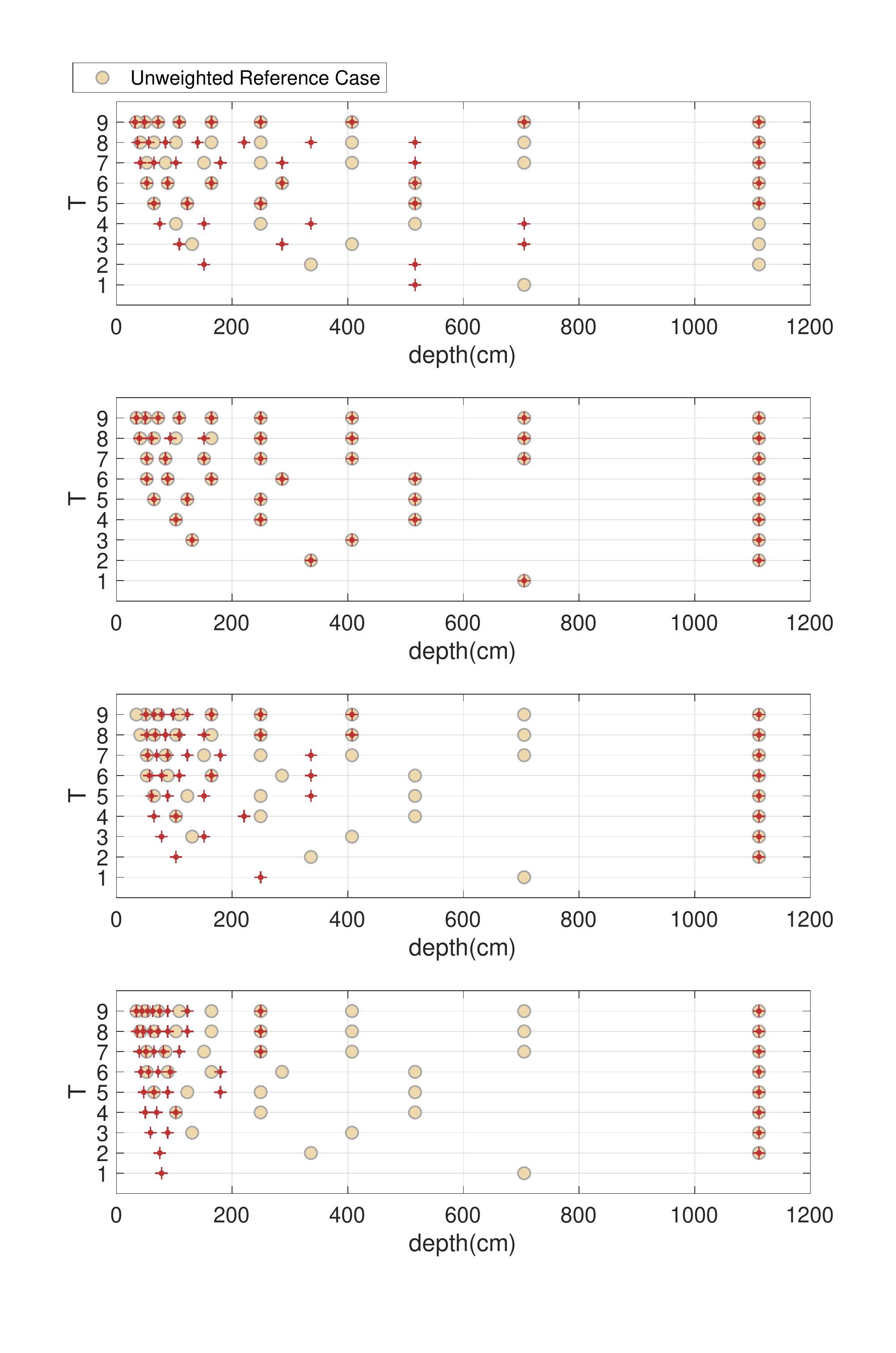}
		\put (38,73.5) {\scalebox{.8}{(e)}} 
		\put (38,48.5) {\scalebox{.8}{(f)}} 
		\put (38,24.5) {\scalebox{.8}{(g)}} 
		\put (38,-1.5) {\scalebox{.8}{(h)}} 
	\end{overpic}
	\vspace{.1cm}
	\caption{The left and right columns show similar stories for each row as detailed in Figure \ref{figOpt}. On the left panels the pupil diameter and the minimum diopter range are set to $p=3$ mm and $D_{\min} = 0.09$ D, and for the right column $p=2$ mm and $D_{\min} = 0.09$ D. }\label{Supp:figOpt}
\end{figure}

\subsection{Supplementary Note 2: Binocular horopters calculations}\label{Supp:Helmoltz}
\subsubsection{Iterative Approach}
The horizontal angular disparity $\delta$ can be approximated using
\begin{equation}
\delta = \frac{\Delta zI}{z^2 + z\Delta z},
\end{equation}
where $\Delta z$ is the depth step that is creating the angular disparity between the eyes, $I$ is the IPD, and $z$ is the depth or distance from the face (the line that connects the center of two pupils) to the depth plane. At each observation distance, $z$,  the physical depth between two closest objects that are detected to be at different depths are calculated. The next observation distance or horopter level is found by substituting the distance with $z+\Delta z$.

\subsubsection{Horopter 3D profile} 
The following equation shows Ogle formula for calculating the Hillebrand horizontal deviation:
\begin{equation}\label{HelmEq}
x^2\left( 1- H\frac{z}{2a}\right) + y^2\left( 1+H\frac{z}{2a}\right) - y\left( \frac{z^2-a^2}{b}+Ha\right) - a^2 + H\frac{az}{2} = 0.~~~~
\end{equation}
Here, $2a$ is the pupil distance, $z$ is the point of fixation of eyes to the center of two eyes, and $(x,y)$ is any point on the trace of the longitudinal horopter. Here, Hering-Hillebrand deviation $H$ is a constant at a fixation distance $z$. The degree of deviation from the theoretical to the empirical horopter trace depends on $H$. Table. \ref{tab:shape-functions} summarizes the variation of $H$ and its relation with distance and eye diopter. The vertical profile of the horopters is assumed to follow a spherical surface with center at the middle point of IPD.

\begin{table}[htbp]
	\centering
	\caption{Value of Hering-Hillebrand deviation for each fixation distance [47].}
	\begin{tabular}{c|c|c|c|c|c}
		\hline
		Fixation &4.5 m &7.3 m &2.37 m &1.29 m &5.5 m \\
		Distance (z) & 2.22 D &1.37 D &4.22 D &7.75 D &1.82 D\\
		\hline
		$H$ &0.108 &0.068 &0.203 &0.366 &0.086\\
	\end{tabular}
	\label{tab:shape-functions}
\end{table}

\subsection{Supplementary Note 3: Depiction of Monocular and Binocular Depth Level Distances}\label{Supp:MonoBin}
For a comparison of the depth levels in one-to-one scale, in this section we present some additional figures showing the monocular depth levels (Figure \ref{Supp:figGuy2}), and the binocular depth levels (Figure \ref{Supp:figGuy3}) relative to a male body.

\begin{figure}[htb!]
	\centering \begin{overpic}[trim={0cm 0cm  0cm 0cm},clip,width=4.5in]{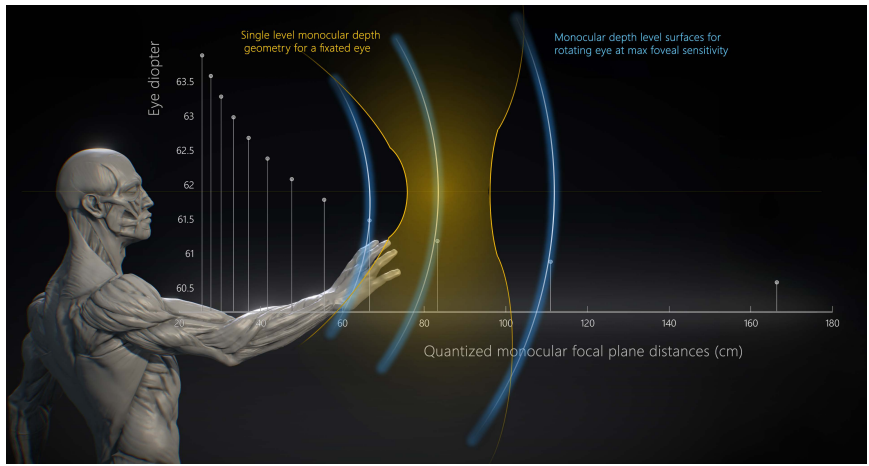}
	\end{overpic}
	\caption{Monocular depth level distances and schematic of the spatial profile are shown from the face of an average 40-year old eye with 3 mm pupil diameter in one-to-one scale with regard to the male anatomy. The blue curves show the surface of the hemisphere for each depth level tiled by the eye rotation at the center of each depth of field range that is perceived as one monocular depth. The orange area shows the depth of field spatial profile for a fixated eye. The area becomes larger as spatial contrast sensitivity drops at larger angles from the fovea.   }\label{Supp:figGuy2}
\end{figure}

\begin{figure}[htb!]
	\centering \begin{overpic}[trim={0cm 0cm  0cm 0cm},clip,width=4.5in]{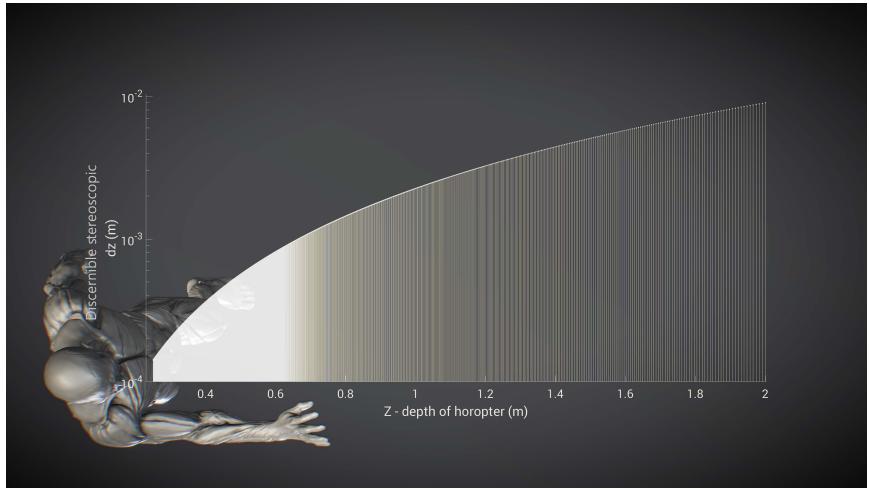}
		\put (48.5, -3) {\scalebox{.8}{(a)}} 
	\end{overpic}\\[.7cm]
	\centering \begin{overpic}[trim={0cm 0cm  0cm 0cm},clip,width=4.5in]{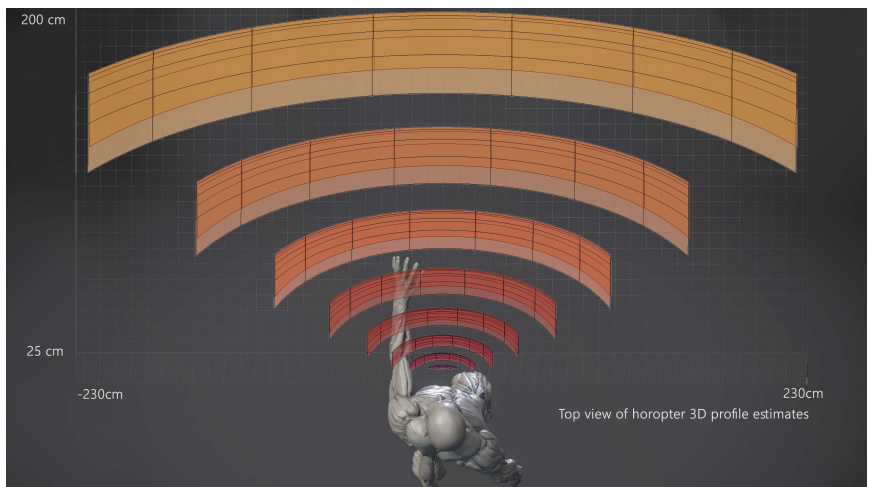}
		\put (48.5, -3) {\scalebox{.8}{(b)}} 
	\end{overpic}
	\vspace{.2cm}
	\caption{Quantized binocular depth level distances and schematic of spatial profile are shown from the face of an average 40-year old eye in one-to-one scale with regard to the male anatomy. (a): Discernible stereoscopic range at a given depth. Each line is one stereoscopic depth level. (b): Schematic view of the horopters from the top in one-to-one scale with regards to an average 40-year old male anatomy. Since horopters are extremely dense at closer distances, we have only shown a sparse set of these layers up to 2 meters. }\label{Supp:figGuy3}
\end{figure}

\subsection{Supplementary Note 4: Experimental validation guide}\label{Supp:ExpVal}
The experimental setup (Figure \ref{Supp:figExp})  overlays the images from two different displays and feeds them to the subject eye in Badal geometry (7.5 cm focal length of Badal lens would be suitable). The depth of each display is tunable independently. A photorefractometer reads the eye accommodation response and pupil size via a hot mirror. This setup allows us to measure subjectively reported defocus. Subjective reports are more tolerate to defocus, and are of notable practical value in order to design a light field optical system that can be commercialized. In order to achieve this, we do not need to paralyze the subject’s eye, and will use the method of limits to probe the threshold to detect blur, which we can define as the upper limit or lower limit of the depth of field.  At the same time, we can record the objective accommodation power by a photo refractometer that reads the refraction of the subject’s eye. The Badal optical system allows the experiment conductor and subject to change the target stimulus distance (change the accommodation) without changing its apparent size.

\begin{figure}[htb!]
	\centering \begin{overpic}[trim={0cm 0cm  0cm 0cm},clip,width=5in]{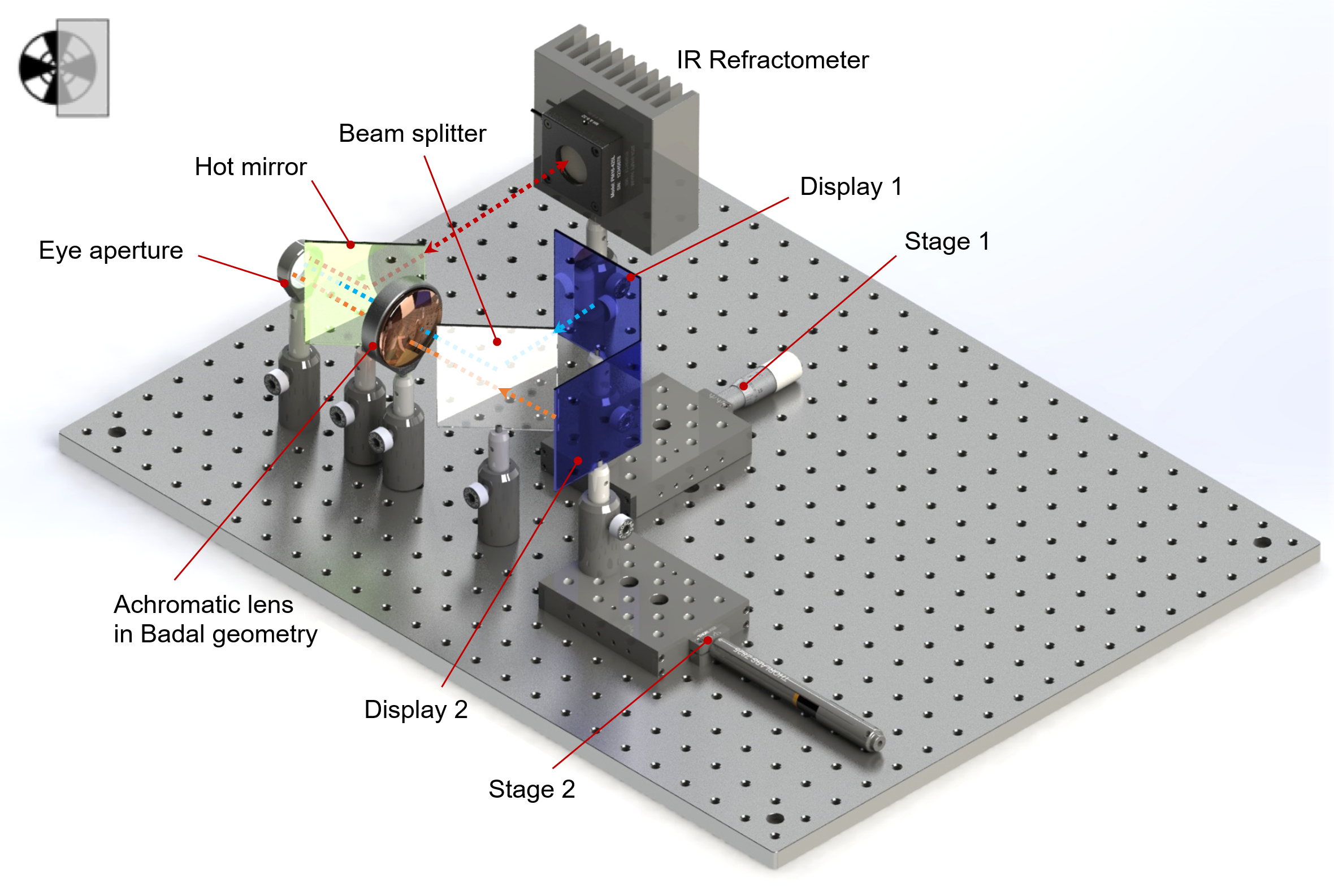}
	\end{overpic}
	\vspace{.1cm}
	\caption{Suggested experimental setup for measuring the quantized monocular depth levels response in subjects using Badal lens. The two images are shifted next to each other at foveal region until the subject indicates that there is a difference perceived in depth.   }\label{Supp:figExp}
\end{figure}

The stimulus should have notable varying spatial contrast (see inset image on the top left of Figure \ref{Supp:figExp}). Subjects will be instructed to accommodate on the left hemifield at all times. While the optical distance of the left hemifield will be fixed, the optical distance of the right hemifield can be moved by adjusting the lens power in the Badal system. Finally, the subject will be instructed to fixate on the left hemifield, and maintain their focus. This is done monocularly. We can use a photo refractometer to monitor their accommodation to make sure their focus did not drift. Then the optical distance of the right hemifield will be moved slowly either closer or further, until the subject reports by a key press at the point of the initial sight loss in clarity of the target. 

The initial distance of both targets will be varied across trials. The range of the distances will be decided based on a pilot study that tests a few distances. The pilot study will cover a range from 15 cm to infinity (6-0.09 D), as they overlap with the distances used in the theoretical simulation.

Since the pupil diameter (variation induced by luminance level in the environment), contrast, and spatial frequency of the stimulus have a big impact on the contrast sensitivity, we also would like to vary these variables and investigate how they contribute to the depth of field in future studies. While some literature showed less sensitivity to small amounts of retinal defocus in aging population, we would like to conduct this experiment on young healthy eyes with normal or corrected to normal visual acuity in a large range of distances.

\subsection{Supplementary Note 5: MATLAB Implementation}\label{Supp:secImplementation}
The code to implement the optimal DoF allocation is made available in MATLAB, and can be  accessed at:\\
\url{https://github.com/aaghasi/DoF-Allocation}.\\
The link contains the main functions to perform the optimal allocation, along with an example demo. Our code uses Gurobi \cite{gurobi} as the MBP/LP solver, which comes with a free license for academic use. In the following we briefly overview the main functions and scripts that need to be executed one after the other to perform a complete allocation. 

-- \texttt{createDoFTrain:}
This script allows the user to create a train of DoF profiles as functions of the depth and age. The code uses some basic parameters such as the pupil diameter,  minimum diopter range, and parameters controlling the resolution of the train knolls, to perform this task. Running \texttt{createDoFTrain} would produce the variables \texttt{profileTrain} and \texttt{centers}, which correspond to the train of DoF profiles and their corresponding centers. These files are encapsulated  as \texttt{RawProfileTrains.mat}, and saved to be used in the subsequent steps. The code also produces a snapshot of the profile trains, similar to Figure \ref{figDoFs}(a). 

-- \texttt{hypoGen:} Once the DoF profile train is generated, this function can be called to generate the binary matrix $\bPi$. While the matrix $\bPi$ can be readily used in the MBP program, the condensing process discussed in the next section can significantly reduce its size and produce an equivalent matrix with significantly fewer rows. 

-- \texttt{LPMBP:} This is the function that solves the main MBP, and uses the function \texttt{Gurobi\_LP\_Solver} to call the Gurobi optimizer. The \texttt{LPMBP} function first tries to solve the LP relaxation (which is faster), and if the solution is not binary, it then tries to call the Gurobi integer programming routines. 

-- \texttt{Demo:} To present a complete example of DoF allocation, the code \texttt{Demo} makes use of the functions described above in order, and perform two sets of DoF allocation for $T=1,\ldots,9$ that produce a result similar to Figure \ref{figOpt}(b).

\subsubsection{Condensing the MBP Matrix}\label{Supp:Condense}
In the paper we discussed that after discretizing the domain into $p$ pixels (voxels), and introducing the variables $\beta_j$, program (\ref{eqT3}) can be cast as the program (\ref{eqT5}). Notice that the number of variables in the (\ref{eqT5}) program is $n+p$, where $n$ is the number of knolls and $p$ is the number of pixels (voxels). Furthermore, the matrix $\bPi$ would be of size $p\times n$. When a fine grid is used to discretize the domain, or our domain is multi-dimensional, the number of (\ref{eqT5}) variables and the number of rows in  $\bPi$ can become very large. In this section we discuss a way to effectively condense the matrix $\bPi$, and in practice work with significantly smaller optimization programs.

\begin{figure}[htb!]
	\centering \begin{overpic}[trim={0cm 0cm  0cm 0cm},clip,width=5in]{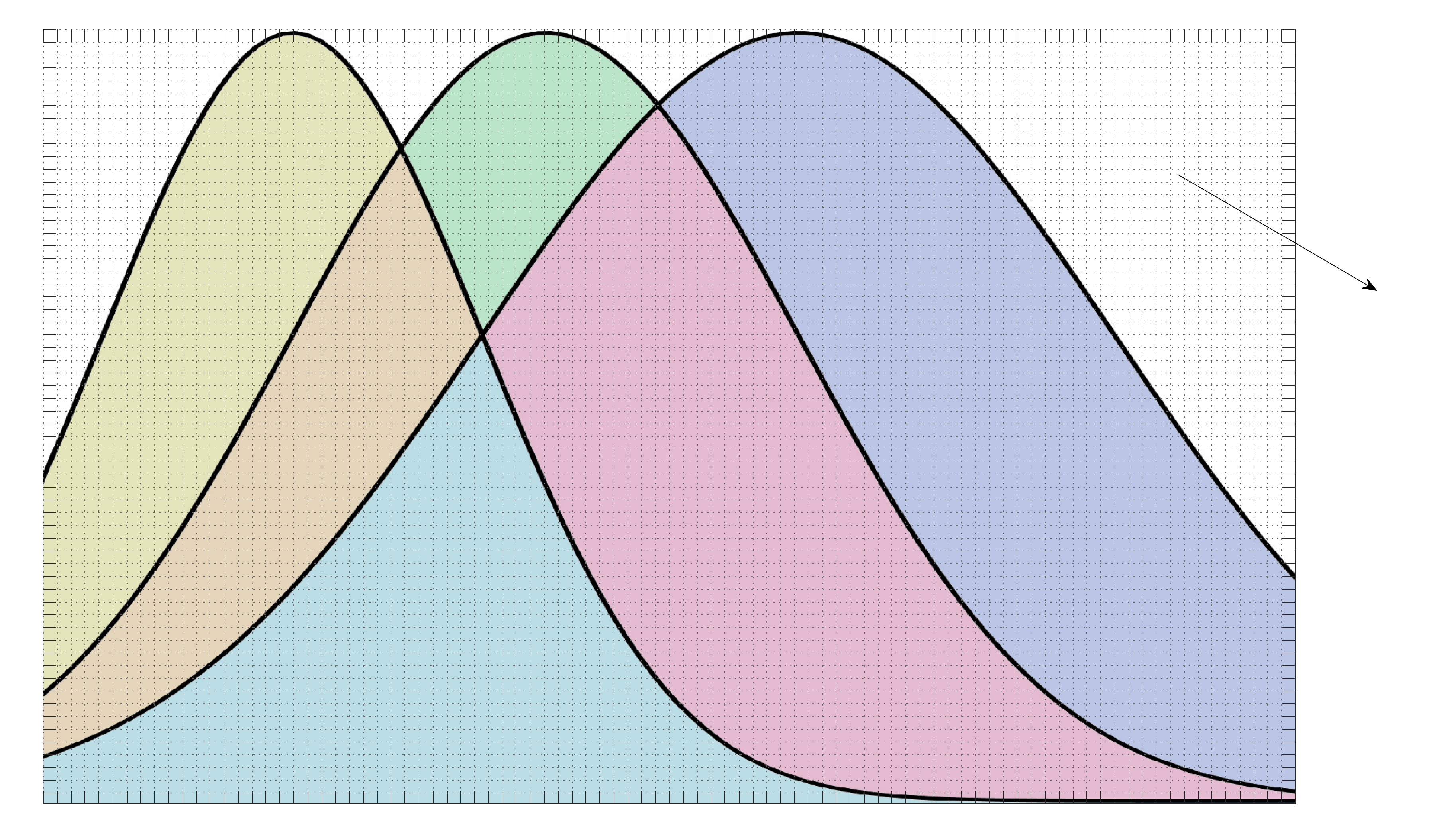}
		\put (97,36.5) {\scalebox{.8}{a pixel}} 
		\put (15.5,50) {\scalebox{1.2}{$\texttt{hyp}(f_1)$}} 
		\put (33,50) {\scalebox{1.2}{$\texttt{hyp}(f_2)$}} 
		\put (52,50) {\scalebox{1.2}{$\texttt{hyp}(f_3)$}} 
		\put (10,30) {\scalebox{1.2}{$\Omega_1$}} 
		\put (23,30) {\scalebox{1.2}{$\Omega_2$}} 
		\put (30,15) {\scalebox{1.2}{$\Omega_3$}} 
		\put (33,44) {\scalebox{1.2}{$\Omega_4$}} 
		\put (46,30) {\scalebox{1.2}{$\Omega_5$}} 
		\put (65,30) {\scalebox{1.2}{$\Omega_6$}} 
		
	\end{overpic}
	\caption{The process of condensing the matrix $\bPi$, forms an equivalent $\bPi_c$ matrix with only 6 rows, instead of $5400$ rows (which corresponds to the number of pixels in the domain).  }\label{Supp:figCond}
\end{figure}
To visually explain the idea, consider a small problem with $n=3$ knolls, as presented in Figure \ref{Supp:figCond}.  Each knoll is in the form of a Gaussian. If the domain is discretized into a grid of $60\times 90$ (i.e., $p=5400$), the matrix $\bPi$ would be of size $5400\times 3$. 

From the figure, one could immediately see that the area under the knolls can be characterized as
\[\bigcup_{i=1}^3\texttt{hyp}(f_i)=\bigcup_{j=1}^6\Omega_j,
\]
where $\Omega_j$ are the \emph{disjoint} regions indicated in the figure. It is interesting to note that the regions $\Omega_j$ can be precisely characterized by the hypograph of the knolls. For example $\Omega_1$ belongs to $\texttt{hyp}(f_1)$, but does not belong to $\texttt{hyp}(f_2)$ or $\texttt{hyp}(f_3)$, and hence can be characterized as $\texttt{hyp}(f_1)\cap \texttt{hyp}(f_2)^c\cap \texttt{hyp}(f_3)^c$. Similarly, $\Omega_2$ can be written as $\texttt{hyp}(f_1)\cap \texttt{hyp}(f_2)\cap \texttt{hyp}(f_3)^c$, and so on. In \cite{aghasi2015convexa}, these regions are referred to as shapelets.

Thanks to the non-overlapping nature of these regions, for all the points $(x,y)$ in each $\Omega_j$, the value of the sum  $\sum_{i=1}^n \alpha_i\pi_{f_i}(x,y)$ is identical. For example, the value of $\sum_{i=1}^3 \alpha_i\pi_{f_i}(x,y)$ is equal to $\alpha_1$ for all the points in $\Omega_1$ (simply because $\pi_{f_2}(x,y)$ and $\pi_{f_3}(x,y)$ are zero in $\Omega_1$), and is $\alpha_1+\alpha_2$ for all the points in $\Omega_2$. Also, in the matrix $\bPi$, the rows corresponding to the pixels within each $\Omega_j$ are identical. For example, the row corresponding to every pixel that resides within $\Omega_1$ is $[1,0,0]$. Similarly, the row corresponding to every pixels that resides within $\Omega_2$ is $[1,1,0]$.

Based on this observation, one may use the super-pixels $\Omega_j$, instead of the fine pixels defined by the grid. This process can be performed by converting the matrix $\bPi$ into a condensed matrix $\bPi_c$. All that is needed is to add up all the identical rows in $\bPi$, and represent them with a single row in $\bPi_c$. For example, below, the condensed version of the matrix on the left is the matrix on the right:
\[\begin{pmatrix} 1 & 1 & 0\\ 0 & 1 & 0\\ 0 & 1 & 0\\ 1 & 1 & 0\\ 1 & 1 & 0\\ 1& 1&1\end{pmatrix}~~~ \underset{\xrightarrow{\hspace*{1cm}}}{condense} ~~ \begin{pmatrix} 3 & 3 & 0\\ 0 & 2 & 0\\1& 1&1\end{pmatrix}.
\]
Consider $p_c$ to be the number of rows in the condensed matrix $\bPi_c$, which turns out to be the number of super-pixels $\Omega_j$.  One can solve the  (\ref{eqT5}) program by using $\bPi_c$ instead of $\bPi$, and deal with a smaller optimization of size $n+p_c$. The new optimization would be in the form
\begin{align}
\minimize_{\balpha,\bbeta }&~ - \boldsymbol{1}^\top \bbeta ~~ ~ \mbox{subject to:}~ \left\{\begin{array}{l}\bbeta\leq \bPi_c\balpha ,  ~~~ \boldsymbol{1}^\top \balpha\leq T  \\ \bbeta\leq \vu,~~~~~~~ \alpha_i\in\{0,1\} ~~ i\in[n] \end{array} \right.. \label{reducedOpt}
\end{align}
Previously, the bound constraint on $\bbeta$ was $\bbeta\leq \boldsymbol{1}$, which in the condensed formulation needs to modify to $\bbeta\leq \vu$, where the $j$-th element of $\vu$ is the number of pixels within the super-pixel $\Omega_j$. Thanks to the properties stated above for the super-pixels, the $\balpha$ part of both optimization  solutions (which is the part we seek) would be identical. 

In the Matlab package, the function \texttt{PiCondenser} performs the condensing procedure described above. The function is used as
\begin{verbatim}
[Pi_c, Ncount] = PiCondenser(Pi);
\end{verbatim}
where the matrix $\bPi$ (shown as \texttt{Pi}) is passed as the input, and the program calculates the matrix $\bPi_c$ (shown as \texttt{Pi\_c}) and the vector $\vu$ ((shown as \texttt{Ncount}) used in \eqref{reducedOpt}.



\end{document}